\documentclass{article}
\usepackage{arxiv}
\usepackage{natbib}
\usepackage[T1]{fontenc}
\usepackage{amsmath,amssymb,amsthm}
\usepackage{graphicx}
\usepackage{xcolor}
\usepackage{booktabs}
\usepackage{multirow}
\usepackage{adjustbox}
\usepackage{url}
\usepackage[ruled,vlined,linesnumbered]{algorithm2e}

\SetCommentSty{algcommentfont}
\DontPrintSemicolon
\usepackage{hyperref}
\usepackage{enumitem}
\usepackage{caption}
\usepackage{placeins}
\usepackage{tikz}
\usetikzlibrary{decorations.pathreplacing, shapes.geometric, positioning, calc}

\theoremstyle{plain}
\newtheorem{proposition}{Proposition}[section]
\theoremstyle{definition}
\newtheorem{definition}{Definition}[section]
\theoremstyle{plain}

\newtheorem{remark}{Remark}

\makeatletter

\renewcommand{\sectionautorefname}{\S\@gobble}
\renewcommand{\subsectionautorefname}{\S\@gobble}
\renewcommand{\subsubsectionautorefname}{\S\@gobble}
\def\appendixautorefname{\S\@gobble}%

\makeatother

\newcommand{\eg}{\emph{e.g.}}

\newcommand{\topm}{\operatorname{top}_m}
\newcommand{\tool}{mBFV}   % m-BitFlip Verifier: top-m bound propagation + branch and bound
\newcommand{\toolnobab}{\tool{}\textsubscript{w/o BaB}}   % mBFV without branch and bound

\title{Towards Verifying Neural Networks Against Multi-Parameter Bit-Flip Perturbations}

\author{
  Hai Duong$^{1}$ \quad Thanh Le$^{2}$ \quad Ho Nguyen$^{1}$ \quad ThanhVu Nguyen$^{1}$ \\
  $^{1}$George Mason University, Fairfax, VA, USA \quad $^{2}$Unaffiliated, Yokosuka, Japan
}

\date{}

\renewcommand{\undertitle}{}
\renewcommand{\shorttitle}{Verifying Neural Networks Against Multi-Parameter Bit Flips}

\hypersetup{
  pdftitle={Towards Verifying Neural Networks Against Multi-Parameter Bit-Flip Perturbations},
  pdfauthor={Hai Duong, Thanh Le, Ho Nguyen, ThanhVu Nguyen},
}

\begin{document}
\maketitle

\begin{abstract}
Hardware faults can flip bits in the stored weights of a quantized neural network, potentially compromising its predictions. While such faults typically affect multiple parameters simultaneously, existing verifiers are limited to single-parameter perturbations due to the combinatorial explosion of possible flip locations in large networks. We present \tool{} ($m$-BitFlip Verifier), an efficient verification framework that proves robustness against simultaneous bit flips across multiple parameters without explicitly enumerating these combinations. \tool{} achieves this via a novel multi-parameter bound propagation technique that directly aggregates the $m$ worst-case contributions. To further tighten these bounds, \tool{} employs a branch-and-bound mechanism over perturbation locations, partitioning the potential flips to smaller groups of neurons. Evaluated on 625 instances, \tool{} successfully verifies 293, significantly outperforming a prior single-parameter verifier (38 verified instances) and an exact mixed-integer linear programming baseline (0 verified instances). Notably, while these baselines are restricted to single-parameter flips on small networks (up to 13k parameters), \tool{} scales to verify networks with up to 1.15M parameters against up to four simultaneous parameter flips.
\end{abstract}

\section{Introduction}\label{sec:intro}

Neural networks are widely deployed with quantized weights across diverse hardware platforms, ranging from edge microcontrollers~\citep{jacob2018quantization} to data center accelerators~\citep{frantar2023gptq}, and to wireless systems such as learned schedulers~\citep{le2025fggm}, semantic communication~\citep{le2026verifying}, traffic prediction~\citep{le2026formal}, and power control~\citep{le2026formal2}.
However, the physical memory storing these discrete values is vulnerable to hardware level bit flips, whether induced maliciously via DRAM Rowhammer~\citep{kim2014rowhammer} and fault injection~\citep{breier2018practical}, or naturally through radiation soft errors~\citep{schmedding2024aspis}.
Consequently, a few flipped bits in the weights can cause a network to misclassify~\citep{hong2019terminal,rakin2019bfa}.
Empirical testing is necessary but insufficient to show that such flips do not exist because the finite set of flip combinations grows exponentially with the number of flipped parameters.
In contrast, neural network verification (NNV) can prove that a property holds for every possible perturbation~\citep{katz2017reluplex,liu2021algorithms,singh2019deeppoly,zhang2018crown}.

BFAVerifier~\citep{zhang2025bfaverifier} is currently the state-of-the-art (SOTA) verifier for bit flips, yet it is strictly limited to a single perturbed parameter.
However, a single flipped parameter rarely changes the prediction of a quantized network~\citep{hong2019terminal}.
Known single-flip vulnerabilities need stronger assumptions such as control of training~\citep{dong2023onebit} or a planted input trigger~\citep{li2025backdoor}.
In practice, flips that change the prediction come in chains of 11 to 17 flips~\citep{rakin2019bfa} or 2 to 24 flips~\citep{yao2020deephammer}.
Therefore, a verification certificate for a single parameter can give false assurance.
In fact, in our evaluation (\autoref{sec:rq1}), only 74 of the 121 instances that \tool{} verifies with one flipped parameter remain verified with four.

Verifying multiple flipped parameters is inherently more challenging because there are $\binom{K}{m}$ ways to choose $m$ of $K$ parameters (\autoref{sec:bab-sound}).
For $m = 2$, a network with 5,000 parameters already has about 12.5 million pairs of parameters.
Running BFAVerifier~\citep{zhang2025bfaverifier} once per pair is infeasible even for such a small network.
A \emph{complete verifier} decides every instance given enough time, whereas an incomplete verifier certifies only some robust instances and returns unknown for the rest~\citep{tjeng2017evaluating}.
An MILP encoding~\citep{tjeng2017evaluating} is complete and covers all choices at once, but it does not scale beyond small networks (\autoref{sec:eval}).
To our knowledge, no existing verifier handles several flipped parameters on networks with millions of parameters.

In this paper, we introduce \tool{} ($m$ BitFlip Verifier), an efficient incomplete verifier that employs both bounding and branching to formally verify quantized networks against multiple parameter bit flip perturbations.
Our key observation is that bound propagation, which is the basis of verifiers such as CROWN~\citep{zhang2018crown} and DeepPoly~\citep{singh2019deeppoly}, can approximate the network output variation as a sum containing one distinct contribution per perturbed parameter.
When at most $m$ parameters may flip, the worst case output deviation is simply the aggregate of the $m$ largest individual contributions.
Therefore, our \emph{multiple-parameter bound propagation} algorithm computes this maximal sum directly without enumerating every possible parameter combination.
A classical result from combinatorial optimization~\citep{schrijver2003combinatorial} guarantees that this direct aggregation method incurs no loss of precision relative to the isolated parameter contributions.

To improve precision, \tool{} employs a branch and bound (BaB) strategy over the perturbed parameters.
Because each parameter contributes to exactly one neuron, altering $m$ parameters affects a maximum of $m$ neurons.
\tool{} leverages this structural property by isolating potential flips to small groups of neurons while keeping the remaining neurons to their original values, which allows multiple-parameter bound propagation to compute significantly tighter bounds.
Note that the generated subproblems cover the entire search space to ensure that no combination of flipped parameters is omitted.
The algorithm iteratively partitions any subproblem that bound propagation fails to verify until the entire space is proven safe.
More importantly, for a network containing $\nu$ neurons distributed across $L$ layers, this branching strategy generates a maximum of $1 + \binom{2\nu - L}{m}$ subproblems.
Since each subproblem requires only a single bound propagation, the overall verification complexity scales polynomially with the number of neurons (\autoref{prop:bab-count}).

We evaluate \tool{} againts $m \in \{1, 2, 4\}$ flipped parameters on fully-connected networks for ACAS Xu, MNIST, and CIFAR-10 with up to 1.15M parameters~\citep{brix2023first}.
Across 625 total instances, \tool{} with BaB proves 293 instances against 184 for \toolnobab{}, 38 for BFAVerifier, and none for the exact MILP.
Notably, while the SOTA baseline is restricted to single parameter perturbations on small networks containing at most 13k parameters, \tool{} scales effectively to verify architectures with up to 1.15M parameters and up to four simultaneous parameter flips.

Our contributions include:
(i) multiple parameter bound propagation, the core bounding mechanism of \tool{}, which evaluates $m$ flipped parameters simultaneously by aggregating their $m$ largest individual contributions (\autoref{sec:topm}),
(ii) a branch and bound strategy over perturbation locations that restricts potential flips to small groups of neurons, ensuring the total number of bounded subproblems scales polynomially with the number of neurons (\autoref{sec:bab}), and
(iii) an GPU implementation and an evaluation on architectures containing up to 1.15 million parameters (\autoref{sec:eval}).

\section{Background}\label{sec:problem}

\paragraph{Quantized neural networks.}
A quantized neural network (QNN) with $L$ layers and $K$ parameters is a function $N : \mathbb{R}^{d} \to \mathbb{R}^{s}$.
Each parameter $w$ of layer $i$ is stored as the signed $Q$-bit integer
\begin{equation}\label{eq:quant}
  q(w) = \mathrm{clip}\bigl(\mathrm{round}(w / \Delta_i),\ -(2^{Q-1}-1),\ 2^{Q-1}-1\bigr)
\end{equation}
where the step size $\Delta_i$ is the largest absolute value of a real parameter of layer $i$ divided by $2^{Q-1} - 1$, $\mathrm{round}$ rounds to the nearest integer, and the clip applies only at quantization time.
The integer is stored in two's complement $\vec v(q) = (v_Q, \dots, v_1) \in \{0,1\}^Q$ with sign bit $v_Q$ and $q = -2^{Q-1} v_Q + \sum_{t<Q} 2^{t-1} v_t$, and it is de-quantized to $\Delta_i\, q(w)$ at inference.
With the de-quantized weights $W^{(i)}$ and biases $b^{(i)}$, the network computes $z_0 = x$, $\hat z_i = W^{(i)} z_{i-1} + b^{(i)}$ (pre-activation), $z_i = \sigma(\hat z_i)$ (post-activation) with an activation $\sigma$ such as ReLU for $i < L$, and $N(x) = \hat z_L$, where layer $i$ has $\nu_i$ neurons.

\begin{definition}[Perturbation vector]\label{def:perturbation}
Given integers $m$ and $n \le Q$, an \emph{$(m,n)$-perturbation vector} $\rho = \{(\alpha_t, P_t) \mid t \le m\}$ is a set of at most $m$ pairs with pairwise distinct parameters $\alpha_t$ (weights or biases of $N$).
Each $P_t \subseteq \{1, \dots, Q\}$ is a possibly empty set of at most $n$ bit positions of $\vec v(q(\alpha_t))$.
The network $N^\rho$ is obtained from $N$ by flipping these bits, and $\mathcal{P}_{m,n}$ denotes the set of all $(m,n)$-perturbation vectors.
\end{definition}

\noindent Under at most $n$ bit flips, a parameter $\alpha$ of layer $i$ can take exactly the values
\begin{equation}\label{eq:reach}
  R_n(\alpha) = \bigl\{\, \Delta_i\, \mathrm{dec}\bigl(\vec v(q(\alpha)) \oplus e\bigr) \;:\;
  e \in \{0,1\}^Q,\ |e| \le n \,\bigr\}
\end{equation}
where $\oplus$ is bitwise XOR, $|e|$ counts the flipped bits, and $\mathrm{dec}$ decodes two's complement.
This set contains the \emph{nominal} value $\Delta_i\, q(\alpha)$ without flips ($e = 0$) and sign-bit flips, which can reach $-2^{Q-1}\Delta_i$ beyond the clip range of~\autoref{eq:quant}.
We enumerate the set exactly and use its interval hull, the \emph{change interval}
\begin{equation}\label{eq:hull}
  [\ell_\alpha, u_\alpha] = \bigl[\min R_n(\alpha) - \Delta_i\, q(\alpha),\ \max R_n(\alpha) - \Delta_i\, q(\alpha)\bigr],
  \qquad \ell_\alpha \le 0 \le u_\alpha
\end{equation}

\paragraph{DNN Verification.}
NNV techniques prove that every input satisfying a property $\phi_{in}$ yields an output satisfying a property $\phi_{out}$.
Standard NNV problems take $\phi_{in}$ as a box $\mathcal{B} = \{x : \|x - x_0\|_\infty \le \epsilon\}$ and $\phi_{out}$ as $N(x)_g > N(x)_k$ for every $k \ne g$, where $g$ is the label of $x_0$.
The property thus holds if and only if the smallest margin $f(x) = \min_{k \ne g} \bigl(N(x)_g - N(x)_k\bigr)$ stays positive on the box
\begin{equation}\label{eq:nnv}
  \min_{x \in \mathcal{B}}\ f(x) \;>\; 0
\end{equation}
Under bit flips, we require~\autoref{eq:nnv} for every perturbed network $N^\rho$ with $\rho \in \mathcal{P}_{m,n}$:
\begin{equation}\label{eq:goal}
  f^\star \;:=\; \min_{\rho \in \mathcal{P}_{m,n}}\ \min_{x \in \mathcal{B}}\ f^\rho(x) \;>\; 0
\end{equation}
A verifier is \emph{sound} if it never verifies a problem with $f^\star \le 0$.
A \emph{complete} verifier also decides every problem, \eg, by solving~\eqref{eq:goal} exactly as an MILP (\autoref{app:milp}).
An \emph{incomplete} verifier such as \tool{} computes a lower bound $lb \le f^\star$. If $lb > 0$, it returns \emph{verified}, since no perturbation $\rho \in \mathcal{P}_{m,n}$ and input $x \in \mathcal{B}$ violate the property. Otherwise it returns \emph{unknown}.

\section{Multi-Parameter Bound Propagation}\label{sec:topm}

\tool{} extends CROWN~\citep{zhang2018crown}, which bounds each neuron as $\underline A x + \underline b \le \hat z_j \le \overline A x + \overline b$, so that these bounds also hold when up to $m$ parameters flip.
We call this extension \emph{multi-bitflip bound propagation} (\autoref{alg:topm}).
At each layer $\ell$ (\autoref{ln:layer}), \tool{} bounds every pre-activation neuron (\autoref{ln:rows}) by \emph{back-substitution} to the input (\autoref{ln:back}).
During the back-substitution, it records the smallest and the largest term (\autoref{eq:g}) of every parameter in $\underline G$ and $\overline G$ (\autoref{ln:terms}-\autoref{ln:termsup}, \autoref{sec:terms}) and relaxes the activation (\autoref{ln:relax}).
Next, it \emph{concretizes} both bounds over $\phi_{in}$ (\autoref{sec:choice}) and adds the sum of the $m$ smallest terms to the lower bound (\autoref{ln:conc}) and the sum of the $m$ largest terms to the upper bound (\autoref{ln:concup}).
Activation $\textsc{Act}$ then maps the resulting pre-activation ranges for the next layer (\autoref{ln:act}).
Finally, \tool{} returns the lower bound of $\phi_{out}$ (\autoref{ln:ret}).

\begin{algorithm}[t]
\caption{Bound propagation under up to $m$ flipped parameters}\label{alg:topm}
\KwIn{network $N$ (weights $W$, biases $b$), property $\phi_{in} \Rightarrow \phi_{out}$, parameter intervals $[\ell_p, u_p]$, numbers of perturbed parameters $m$}
\KwOut{$lb \le f^\star$}
$[\underline z_0, \overline z_0] \gets \phi_{in}$\;\label{ln:init}
\For{$\ell = 1, \dots, L$\label{ln:layer}}{
  \ForEach{neuron $t$ of layer $\ell$\label{ln:rows}}{
    $\underline G, \overline G \gets \emptyset$\;\label{ln:start}
    \For{$i = \ell, \dots, 1$\label{ln:back}}{
      \ForEach{parameter $p = (j, k)$ of layer $i$}{
        $\underline G \gets \underline G \cup \bigl\{\min\{\underline A_j\, D\, z : D \in \{\ell_p, u_p\},\ z \in \{\underline z_{i-1,k}, \overline z_{i-1,k}\}\}\bigr\}$ \tcp*{\autoref{eq:g}}\label{ln:terms}
        $\overline G \gets \overline G \cup \bigl\{\max\{\overline A_j\, D\, z : D \in \{\ell_p, u_p\},\ z \in \{\underline z_{i-1,k}, \overline z_{i-1,k}\}\}\bigr\}$\;\label{ln:termsup}
      }
      $(\underline A, \underline b, \overline A, \overline b) \gets \textsc{Relax}\bigl(\underline A, \underline b, \overline A, \overline b, [\underline{\hat z}_{i-1}, \overline{\hat z}_{i-1}]\bigr)$\;\label{ln:relax}
    }
    $\underline{\hat z}_{\ell,t} \gets \min_{x \in \phi_{in}} (\underline A x + \underline b) + \topm(\underline G)$ \tcp*{\autoref{prop:topm}}\label{ln:conc}
    $\overline{\hat z}_{\ell,t} \gets \max_{x \in \phi_{in}} (\overline A x + \overline b) - \topm(-\overline G)$\;\label{ln:concup}
  }
  $[\underline z_\ell, \overline z_\ell] \gets \textsc{Act}\bigl([\underline{\hat z}_\ell, \overline{\hat z}_\ell]\bigr)$\;\label{ln:act}
}
\Return{$\min_t \underline{\hat z}_{L,t}$}\;\label{ln:ret}
\end{algorithm}

\subsection{Per-Parameter Decomposition}\label{sec:terms}
Back-substitution bounds an output $\hat z_{\ell,t}$ in terms of each earlier layer $i$:
\begin{equation}\label{eq:layerbound}
  \underline A\, \hat z_i + \underline b \;\le\; \hat z_{\ell,t} \;\le\; \overline A\, \hat z_i + \overline b
\end{equation}
The process starts at $i = \ell$ and ends at the input $x$, where the bounds become linear functions of $x$, \eg, $\underline A x + \underline b  \;\le\; \hat z_{\ell,t} \;\le\; \overline A x + \overline b$.
Each step from layer $i$ to layer $i-1$ replaces $\hat z_i$ in~\autoref{eq:layerbound} by its definition, which under bit flips is
\begin{equation}\label{eq:neuron}
  \hat z_i = (W^{(i)} + D^{(i)})\, z_{i-1} + b^{(i)} + d^{(i)}
\end{equation}
where each change $D^{(i)}_{jk}$ or $d^{(i)}_j$ is $0$ if its parameter $p$ is not perturbed and otherwise lies in its change interval $[\ell_p, u_p]$~\autoref{eq:hull}.
Substituting~\autoref{eq:neuron} into~\autoref{eq:layerbound} gives
\begin{equation}\label{eq:subst}
  \begin{aligned}
  \underline A\, \hat z_i + \underline b &= \underbrace{\underline A \big(W^{(i)} z_{i-1} + b^{(i)}\big) + \underline b}_{\text{nominal}}
  + \sum_{j,k} \underline A_j D^{(i)}_{jk} z_{i-1,k} + \sum_j \underline A_j d^{(i)}_j \\
  \overline A\, \hat z_i + \overline b &= \overline A \big(W^{(i)} z_{i-1} + b^{(i)}\big) + \overline b
  + \sum_{j,k} \overline A_j D^{(i)}_{jk} z_{i-1,k} + \sum_j \overline A_j d^{(i)}_j
  \end{aligned}
\end{equation}
The nominal part is back-substituted as in CROWN~\citep{zhang2018crown}, while the remaining terms carry the flips.
Since the flip of each weight $p = W^{(i)}_{jk}$ is unknown, \tool{} replaces its terms $\underline A_j D^{(i)}_{jk} z_{i-1,k}$ and $\overline A_j D^{(i)}_{jk} z_{i-1,k}$ by their worst cases
\begin{equation}\label{eq:g}
  \begin{aligned}
  \underline g_p &= \min_{D^{(i)}_{jk} \in [\ell_p, u_p],\; z_{i-1,k} \in [\underline z_{i-1,k}, \overline z_{i-1,k}]} \underline A_j\, D^{(i)}_{jk}\, z_{i-1,k} \\
  \overline g_p &= \max_{D^{(i)}_{jk} \in [\ell_p, u_p],\; z_{i-1,k} \in [\underline z_{i-1,k}, \overline z_{i-1,k}]} \overline A_j\, D^{(i)}_{jk}\, z_{i-1,k}
  \end{aligned}
\end{equation}
Both extremes lie at corners of $[\ell_p, u_p] \times [\underline z_{i-1,k}, \overline z_{i-1,k}]$ because the term is bilinear (\autoref{ln:terms}).
Likewise, the terms $\underline A_j d^{(i)}_j$ and $\overline A_j d^{(i)}_j$ of each bias $p = b^{(i)}_j$ have the worst cases
\begin{equation}\label{eq:gbias}
  \underline g_p = \min_{d^{(i)}_j \in [\ell_p, u_p]} \underline A_j\, d^{(i)}_j
  \qquad
  \overline g_p = \max_{d^{(i)}_j \in [\ell_p, u_p]} \overline A_j\, d^{(i)}_j
\end{equation}
Let the vector $s \in \{0,1\}^K$ indicate the perturbed parameters, with $s_p = 1$ if parameter $p$ is perturbed and $s_p = 0$ otherwise.
For every input $x \in \mathcal{B}$ and every choice of perturbed parameters, \tool{} bounds $\hat z_{\ell,t}$ by
\begin{equation}\label{eq:decomp}
  \underline A x + \underline b + \sum_p s_p\, \underline g_p \;\le\; \hat z_{\ell,t} \;\le\; \overline A x + \overline b + \sum_p s_p\, \overline g_p
\end{equation}

\subsection{Multi-Parameter Concretization}\label{sec:choice}
Concretization turns~\autoref{eq:decomp} into scalars by optimizing over every input $x \in \mathcal{B}$ and every choice of at most $m$ perturbed parameters $\mathcal{S}_m = \Big\{ s \in \{0,1\}^K : \sum_p s_p \le m \Big\}$.
The input part and the perturbation part of~\autoref{eq:decomp} share no variables, so the bounds split into:
\begin{equation}\label{eq:conc}
  \begin{aligned}
  \underline{\hat z}_{\ell,t} &= \underbrace{\min_{x \in \mathcal{B}} \big(\underline A x + \underline b\big)}_{\text{input part}}
  + \underbrace{\min_{s \in \mathcal{S}_m} \sum_p s_p\, \underline g_p}_{\text{perturbation part}} \\
  \overline{\hat z}_{\ell,t} &= \max_{x \in \mathcal{B}} \big(\overline A x + \overline b\big)
  + \max_{s \in \mathcal{S}_m} \sum_p s_p\, \overline g_p
  \end{aligned}
\end{equation}
The input part is concretized as in CROWN~\citep{zhang2018crown}.
The perturbation part is an integer program over $\binom{K}{m}$ choices of parameters, which \tool{} relaxes to the linear program over the convex hull of $\mathcal{S}_m$:
\begin{equation}\label{eq:relax}
  \mathcal{C}_m = \Big\{ s \in [0,1]^K : \sum_p s_p \le m \Big\}
\end{equation}
\autoref{prop:topm} shows that the relaxation is exact, so linear and integer programs have the same optimum:
\begin{equation}\label{eq:topm}
  \begin{aligned}
  \min_{s \in \mathcal{C}_m} \sum_p s_p\, \underline g_p &= \min_{s \in \mathcal{S}_m} \sum_p s_p\, \underline g_p = \topm(\underline g) \\
  \max_{s \in \mathcal{C}_m} \sum_p s_p\, \overline g_p &= \max_{s \in \mathcal{S}_m} \sum_p s_p\, \overline g_p = -\topm(-\overline g)
  \end{aligned}
\end{equation}
where $\topm(g)$ is the sum of the $m$ smallest entries of $g$, and the upper bound follows by applying \autoref{prop:topm} to $-\overline g$.

\begin{proposition}\label{prop:topm}
Assume $g_p \le 0$ for every $p$. Let $T$ be a set of $\min(m, K)$ parameters whose terms are no
larger than any term outside $T$ (the $m$ most negative terms). Then
\begin{enumerate}[nosep,label=(\alph*)]
  \item $\sum_{p \in T} g_p \le \sum_p s_p g_p$ for every $s \in \mathcal{C}_m$;
  \item the 0/1 vector $\mathbf{1}_T$ of $T$ lies in $\mathcal{S}_m$ and attains $\sum_{p \in T} g_p$.
\end{enumerate}
Hence
$\displaystyle \min_{s \in \mathcal{C}_m} \sum_p s_p g_p
 = \min_{s \in \mathcal{S}_m} \sum_p s_p g_p = \sum_{p \in T} g_p = \topm(g)$.
\end{proposition}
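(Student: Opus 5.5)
We argue directly from the structure of $\mathcal{C}_m$ and the sign assumption $g_p \le 0$. This is a small case of the exactness of LP relaxations of uniform matroid polytopes~\citep{schrijver2003combinatorial}, but an elementary exchange argument is enough and avoids invoking total unimodularity. Write $m' = \min(m,K)$ and let $\tau = \max_{p \in T} g_p$ be the largest (least negative) term inside $T$. The hypothesis on $T$ says $g_q \ge \tau$ for every $q \notin T$, and $\tau \le 0$ by the sign assumption. If $T$ is empty, i.e.\ $m = 0$, everything is trivially $0$, so we assume $m' \ge 1$.

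Part (b) is immediate. $\mathbf{1}_T$ is a 0/1 vector with $|T| = m' \le m$ ones, so it lies in $\mathcal{S}_m \subseteq \mathcal{C}_m$. Its objective value is $\sum_{p\in T} g_p$ by definition.

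For part (a), fix $s \in \mathcal{C}_m$ and consider the difference
\[
\sum_p s_p g_p - \sum_{p\in T} g_p = \sum_{p \notin T} s_p g_p - \sum_{p\in T}(1-s_p) g_p .
\]
The key step is to bound each piece using $\tau$. For $p \notin T$ we have $s_p \ge 0$ and $g_p \ge \tau$, so $s_p g_p \ge s_p \tau$. For $p \in T$ we have $1-s_p \ge 0$ and $g_p \le \tau$, so $-(1-s_p)g_p \ge -(1-s_p)\tau$. Adding these bounds, the difference is at least
\[
\tau\Bigl(\sum_{p\notin T} s_p - \sum_{p\in T}(1-s_p)\Bigr) = \tau\Bigl(\sum_p s_p - m'\Bigr).
\]
This last expression is where the main (mild) obstacle lies, since its sign depends on the two cases for $m'$. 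If $m' = m$, then $\sum_p s_p \le m = m'$, and since $\tau \le 0$ the product is $\ge 0$. This is exactly where the sign assumption $g_p \le 0$ is needed. If $m' = K < m$, then $T$ contains all parameters, and the difference equals $-\sum_p (1-s_p) g_p \ge 0$ directly, because each $1-s_p \ge 0$ and $g_p \le 0$. In both cases the difference is nonnegative, which proves (a).

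Finally, combine the two parts with the inclusion $\mathcal{S}_m \subseteq \mathcal{C}_m$. This gives
\[
\sum_{T} g_p \le \min_{\mathcal{C}_m} \le \min_{\mathcal{S}_m} \le \sum_T g_p ,
\]
where the first inequality is (a), the second holds because the LP minimizes over a larger set, and the third is (b). All quantities are therefore equal. The sum of the $m'$ smallest entries is $\topm(g)$ by definition, which gives the displayed chain of equalities.
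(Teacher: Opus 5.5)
Your proof is correct and is essentially the paper's threshold (dual-certificate) argument: you bound each term as $(s_p-\mathbf{1}_T(p))\,g_p \ge (s_p-\mathbf{1}_T(p))\,\tau$, sum to $\tau\bigl(\sum_p s_p - |T|\bigr)$, and check its sign in the two cases $|T|=m$ and $|T|=K<m$, just as the paper does. The only difference is the threshold: you use $\tau=\max_{p\in T}g_p$, whereas the paper uses $\theta=\min\bigl(0,\min_{q\notin T}g_q\bigr)$, or $\theta=0$ when $T$ contains every parameter; your choice is why you handle $T=\emptyset$ separately, and that case arises when $K=0$ as well as when $m=0$, though the value is trivially $0$ either way.
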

\begin{proof}[Proof sketch]
Pick a threshold $\theta \le 0$ with $g_p \le \theta$ inside $T$ and $g_p \ge \theta$ outside. Then
$(s_p - \mathbf{1}_T(p))(g_p - \theta) \ge 0$ elementwise, and summing gives
$\sum_p s_p g_p - \sum_{p \in T} g_p \ge \theta (\sum_p s_p - |T|) \ge 0$. The vector $\mathbf{1}_T$
attains the bound. \autoref{app:proof} gives the full proof, checked in Lean.
\end{proof}
For example, with $g = (-5, -3, -1, 0)$ and $m=2$, the optimum is $s = (1,1,0,0)$ with value $-8$, while $s = (1, 0.5, 0.5, 0)$ reaches only $-7$.
The perturbation part thus has a closed form, whereas word substitutions need dynamic programming~\citep{xu2020automatic}.

\subsection{Soundness of \tool{}}\label{sec:budget}

We show that every bound of \autoref{alg:topm} holds for every input $x \in \mathcal{B}$ and every flip in at most $m$ parameters.
Assume this holds for the neuron ranges of the earlier layers, which \autoref{alg:topm} computes first, so CROWN's relaxation of the activation is valid.
Each flipped parameter then contributes a term between $\underline g_p$ and $\overline g_p$ to~\autoref{eq:subst}, since its change lies in $[\ell_p, u_p]$ and its input lies in its range.
Every other parameter contributes $0$, so~\autoref{eq:decomp} holds with the vector $s \in \mathcal{S}_m$ of the flipped parameters.
Because~\autoref{eq:conc} optimizes over all inputs and all $s \in \mathcal{S}_m$, the true value of $\hat z_{\ell,t}$ lies between $\underline{\hat z}_{\ell,t}$ and $\overline{\hat z}_{\ell,t}$.
By induction over the layers, \tool{} is therefore sound.

\section{Branch and Bound for Bit Flip Verification}\label{sec:bab}

Branch and bound tightens the bounds of multi-bitflip bound propagation (\autoref{sec:topm}) by restricting where the flips may happen.
Each \emph{subproblem} allows flips only in a few chosen neurons, so every other neuron keeps its original range (\autoref{sec:branch}).
Fewer changing neurons give tighter ranges in the following layers and hence a tighter bound on the margin.
The network is verified when every subproblem is verified, and each subproblem that is not verified is split into smaller ones.

\autoref{alg:bab} describes the search, where $\textsc{Bound}$ is the lower bound on the margin that multi-bitflip bound propagation computes when flips are allowed only in the given neurons.
It first allows flips in all neurons (\autoref{ln:bab-root}) and otherwise bounds the subproblems in a queue (\autoref{ln:bab-init} to \autoref{ln:bab-bound}).
It splits every subproblem whose bound is not positive (\autoref{ln:bab-branch}) and returns verified once all subproblems are verified (\autoref{ln:bab-ret}).

\begin{algorithm}[t]
\caption{Branch and bound over perturbation locations}\label{alg:bab}
\KwIn{network $N$, property $\phi_{in} \Rightarrow \phi_{out}$, number $m$ of flipped parameters, time limit}
\KwOut{verified, unknown, or timeout}
\If(\tcp*[f]{\autoref{sec:topm}}){$\textsc{Bound}(\text{all neurons}) > 0$\label{ln:bab-root}}{\Return{verified}\;}
$Q \gets \{ S \subseteq \{\text{layers}\} : |S| = m \}$, \quad $\mathcal{G} \gets Q$, \quad $P \gets \emptyset$ \tcp*{\autoref{sec:branch}}\label{ln:bab-init}
\While{$Q \neq \emptyset$}{
  pop $S \in Q$ with $\bigcup S \not\subseteq \bigcup S'$ for all $S' \in P$ \tcp*{\autoref{sec:bab-sound}}
  \eIf(\tcp*[f]{\autoref{sec:topm}}){$\textsc{Bound}(S) > 0$\label{ln:bab-bound}}{
    $P \gets P \cup \{S\}$\;
  }{
    \lIf{$|G| = 1$ for all $G \in S$}{\Return{unknown}}\label{ln:bab-leaf}
    $Q \gets Q \cup (\textsc{Branch}(S) \setminus \mathcal{G})$, \quad $\mathcal{G} \gets \mathcal{G} \cup \textsc{Branch}(S)$ \tcp*{\autoref{eq:branch}}\label{ln:bab-branch}
  }
  \lIf{time limit reached}{\Return{timeout}}
}
\Return{verified} \label{ln:bab-ret}
\end{algorithm}

\subsection{Groups and Subproblems}\label{sec:branch}
\begin{figure}[t]
\centering
\resizebox{\linewidth}{!}{%
\begin{tikzpicture}[
    cell/.style={draw=black!40, minimum size=5.2mm, inner sep=0pt},
    att/.style={cell, fill=red!35},
    neu/.style={circle, draw, minimum size=6.5mm, inner sep=0pt, font=\scriptsize},
    wide/.style={neu, fill=red!35},
    exact/.style={neu, fill=white},
    down/.style={neu, fill=orange!25},
    inp/.style={circle, draw, fill=black!8, minimum size=5.5mm, inner sep=0pt, font=\scriptsize},
    edge/.style={black!25},
    aedge/.style={red!70!black, thick}]
\begin{scope}
  \node[font=\small\bfseries] at (1.6, 1.0) {(a) parameters of layer $i$};
  \foreach \j in {1,...,5} \node[font=\scriptsize] at (0.52*\j-0.26, 0.45) {$z_{\j}$};
  \node[font=\scriptsize] at (3.02, 0.45) {$b$};
  \foreach \i in {1,...,4} {
    \node[font=\scriptsize, anchor=east] at (-0.1, -0.52*\i+0.26) {$\hat z_{\i}$};
    \foreach \j in {1,...,5} {
      \ifnum\i<3 \node[att] at (0.52*\j-0.26, -0.52*\i+0.26) {};
      \else \node[cell] at (0.52*\j-0.26, -0.52*\i+0.26) {}; \fi }
    \ifnum\i<3 \node[att] at (3.02, -0.52*\i+0.26) {};
    \else \node[cell] at (3.02, -0.52*\i+0.26) {}; \fi }
  \draw[decorate, decoration={brace, amplitude=4pt}, thick]
        (3.4, 0.02) -- (3.4, -1.04) node[midway, right=4pt, align=left, font=\scriptsize]
        {group $G$\\$= \{\hat z_1, \hat z_2\}$};
  \draw[blue!70!black, dashed, thick] (1.04-0.02, 0.02) rectangle (1.56+0.02, -2.10);
  \node[font=\scriptsize, blue!70!black, align=center] at (1.3, -2.55) {weights out\\of $z_3$};
\end{scope}
\begin{scope}[xshift=6.2cm]
  \node[font=\small\bfseries] at (1.2, 1.0) {(b) bound propagation};
  \foreach \j in {1,...,5} \node[inp] (x\j) at (0, 0.55-0.55*\j) {$z_{\j}$};
  \foreach \i in {1,...,4} \node[wide] (z\i) at (1.4, 0.3-0.62*\i) {$\hat z_{\i}$};
  \foreach \i in {1,...,4} \foreach \j in {1,...,5} \draw[aedge, red!45] (x\j) -- (z\i);
  \foreach \k in {1,2} \node[down] (y\k) at (2.6, -0.55-0.8*\k) {};
  \foreach \i in {1,...,4} \foreach \k in {1,2} \draw[edge] (z\i) -- (y\k);
  \node[font=\scriptsize, align=center] at (1.3, -2.75) {every $\hat z_j$ assumes its own worst $m$};
\end{scope}
\begin{scope}[xshift=10.2cm]
  \node[font=\small\bfseries] at (1.2, 1.0) {(c) subproblem $S = \{G\}$};
  \foreach \j in {1,...,5} \node[inp] (x\j) at (0, 0.55-0.55*\j) {$z_{\j}$};
  \node[wide]  (z1) at (1.4, 0.3-0.62*1) {$\hat z_1$};
  \node[wide]  (z2) at (1.4, 0.3-0.62*2) {$\hat z_2$};
  \node[exact] (z3) at (1.4, 0.3-0.62*3) {$\hat z_3$};
  \node[exact] (z4) at (1.4, 0.3-0.62*4) {$\hat z_4$};
  \foreach \i in {3,4} \foreach \j in {1,...,5} \draw[edge] (x\j) -- (z\i);
  \foreach \i in {1,2} \foreach \j in {1,...,5} \draw[aedge] (x\j) -- (z\i);
  \foreach \k in {1,2} \node[down] (y\k) at (2.6, -0.55-0.8*\k) {};
  \foreach \i in {1,...,4} \foreach \k in {1,2} \draw[edge] (z\i) -- (y\k);
  \node[font=\scriptsize, align=center] at (1.3, -2.75) {only $\hat z_1, \hat z_2$ perturbable};
\end{scope}
\begin{scope}[yshift=-3.45cm, xshift=0.3cm]
  \node[att, minimum size=3.5mm] at (0,0) {}; \node[font=\scriptsize, anchor=west] at (0.25,0) {perturbable parameter};
  \node[wide, minimum size=3.5mm] at (3.5,0) {}; \node[font=\scriptsize, anchor=west] at (3.75,0) {perturbable: range under $m$ flips};
  \node[exact, minimum size=3.5mm] at (8.2,0) {}; \node[font=\scriptsize, anchor=west] at (8.45,0) {not perturbable: nominal range};
  \node[down, minimum size=3.5mm] at (12.2,0) {}; \node[font=\scriptsize, anchor=west] at (12.45,0) {next layer};
\end{scope}
\end{tikzpicture}}
\caption{Allowing flips only in a group of neurons of layer $i$.}
\label{fig:neurons}
\end{figure}

The parameters of a neuron $\hat z_j$ are its incoming weights and bias, and they feed only $\hat z_j$ (\autoref{fig:neurons}a).
Since the linear relaxation allow flips to occur anywhere, multi-bitflip bound propagation therefore widens the range of every neuron (\autoref{fig:neurons}b).
Consider first $m$ flips in a single layer.
A \emph{group} is a set of neurons of one layer, and allowing flips only in a group widens only its neurons (\autoref{fig:neurons}c).
All other neurons keep their original ranges, which tightens the ranges in the following layers.
Flips spread over several layers can be solved by solving multiple a single layer multi-bitflip.

A \emph{subproblem} $S$ is a set of at most $m$ disjoint groups, and a perturbation \emph{lies in} $S$ if it flips only parameters of neurons in these groups.
$\textsc{Bound}(S)$ allows flips only in these neurons, so it bounds the margin under every perturbation that lies in $S$.
The search starts with whole layers as groups (\autoref{ln:bab-init}), since $m$ flipped parameters lie in at most $m$ layers.

When $S$ is not verified, $\textsc{Branch}(S)$ splits its largest group $G$ at the middle neuron into two parts $G_1$ and $G_2$ and returns the subproblems
\begin{equation}\label{eq:branch}
  \textsc{Branch}(S) = \{ C \subseteq T : |C| = m \}, \qquad T = (S \setminus \{G\}) \cup \{G_1, G_2\}
\end{equation}
Each of the $m$ flipped parameters lies in exactly one group of $T$, so every perturbation that lies in $S$ also lies in one of these subproblems (\autoref{prop:cover-step}).
If $T$ has at most $m$ groups, the only such subproblem allows the same neurons as $S$, so $\textsc{Branch}$ splits again.
Unlike branching on activation phases~\citep{wang2021beta} or on the input box, this search branches on the remaining multi-bitflip relaxed neurons. % where the flips happen.

\subsection{Soundness and Complexity}\label{sec:bab-sound}
\paragraph{Soundness.} At every step of \autoref{alg:bab}, each perturbation in $\mathcal{P}_{m,n}$ lies in a subproblem of the queue $Q$ or of the verified set $P$, since $\textsc{Branch}$ keeps every perturbation (\autoref{prop:cover}).
A subproblem $S \in P$ has $\textsc{Bound}(S) > 0$, so the margin stays positive under every perturbation that lies in $S$.
Let $S'$ be a subproblem in $Q$ whose neurons are all neurons of some $S \in P$.
Every perturbation that lies in $S'$ then also lies in $S$ and is already verified, so the search skips $S'$ without bounding it.
When $Q$ is empty, every perturbation therefore lies in a subproblem of $P$, so the algorithm is sound (\autoref{prop:bab-sound}).

\paragraph{Complexity.} The cost of \autoref{alg:bab} is the number of calls of $\textsc{Bound}$, each of which runs multi-bitflip bound propagation once.
\autoref{prop:bab-count} bounds this number in terms of the number of neurons $\nu$ (\autoref{sec:budget}).

\begin{proposition}[Number of bound propagations]\label{prop:bab-count}
For every $m$, batch size, time limit, and outcome of \textnormal{\textsc{Bound}}, and
whatever the verdict, \textnormal{\tool{}} makes at most
\begin{equation}\label{eq:bab-count}
  1 + \binom{2\nu - L}{m}
\end{equation}
bound propagations. Assume $\nu > m$, no time limit, and $\textnormal{\textsc{Bound}}(S) > 0$ exactly
when $S$ may perturb the parameters of at most $m$ neurons. Then \textnormal{\tool{}} returns verified
after at least $1 + \binom{\nu}{m}$ bound propagations, and for $m = 1$ and $L \ge 2$ after exactly
$1 + 2\nu - L$, so~\autoref{eq:bab-count} is attained.
\end{proposition}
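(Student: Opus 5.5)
The upper bound is a counting argument over the groups that can ever occur. Because $\textsc{Branch}$ always splits a group $G$ at its middle neuron, and this rule is deterministic, every group that ever appears in a subproblem is a node of a fixed binary \emph{split tree} for its layer. The root of that tree is the whole layer (\autoref{ln:bab-init}), a node's children are its two halves, and the leaves are singletons. A full binary tree with $\nu_i$ leaves has $2\nu_i - 1$ nodes, so summing over layers gives exactly $\sum_i (2\nu_i - 1) = 2\nu - L$ possible groups. Every subproblem placed in $Q$ is a set of exactly $m$ such groups, either from \autoref{ln:bab-init} or from \autoref{eq:branch}, which takes $m$-subsets of $T$. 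The bookkeeping set $\mathcal{G}$ guarantees that a subproblem enters $Q$ at most once, because $\textsc{Branch}(S) \setminus \mathcal{G}$ only adds never-seen subproblems. Each subproblem is also popped and bounded at most once. Hence the loop makes at most $\binom{2\nu-L}{m}$ calls, and together with the root call on \autoref{ln:bab-root} this gives \autoref{eq:bab-count}. The bound holds regardless of batching, the time limit, or the outcomes of $\textsc{Bound}$, since those only cause early exits or skips.

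For the lower bound under the oracle, I first note that the root call fails: it allows all $\nu > m$ neurons. Next, no call returns unknown on \autoref{ln:bab-leaf}. Such a return requires an all-singleton subproblem, which touches at most $m$ neurons and so passes. By the coverage invariant (\autoref{prop:cover}, \autoref{prop:bab-sound}), the search terminates only when every perturbation lies in some verified $S \in P$. For each $m$-set $U$ of neurons, a perturbation flipping one parameter in each neuron of $U$ must lie in a verified subproblem. That subproblem's neuron set contains $U$ and has at most $m$ neurons, so it equals $U$. Since it consists of $m$ nonempty disjoint groups, it is exactly the subproblem of the $m$ singletons of $U$. These $\binom{\nu}{m}$ subproblems are pairwise distinct. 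None is skipped by the pop rule, because a skip would need a verified $S'$ with the same union, which is the same subproblem and is excluded by deduplication. So at least $1+\binom{\nu}{m}$ propagations occur, and the search ends in verified.

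For $m = 1$ every subproblem is a single group. I will argue by induction down each split tree that every node is generated and bounded. A node with more than one neuron fails and is split into its two children, which are new. A singleton passes and is never split. The skip rule never fires, because a verified singleton's union contains no other tree node, and distinct nodes have distinct neuron sets. Hence exactly the $2\nu - L$ tree nodes are bounded, plus the root. The hypothesis $L \ge 2$ is needed so that the root call (all neurons) is not itself one of the layer groups; for $L=1$ the whole layer coincides with the root and would be counted twice. This gives exactly $1 + 2\nu - L$ and shows the bound is attained.

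The step I expect to be the main obstacle is the careful bookkeeping in the upper bound: showing that groups are canonical tree nodes and that the deduplication via $\mathcal{G}$ together with the pop rule never bounds the same subproblem twice. I would handle it first by making the split-tree invariant explicit and proving it by induction on the number of $\textsc{Branch}$ calls.
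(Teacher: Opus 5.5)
Your proposal is correct and follows the paper's proof essentially step for step. It uses the same per-layer splitting tree with $2\nu_i - 1$ nodes, together with deduplication, to get at most $\binom{2\nu-L}{m}$ bounded subproblems after the root; the same covering argument that forces every all-singleton $m$-set of neurons into $P$; and the same every-node-bounded-once argument for $m=1$, $L \ge 2$. Two points the paper states explicitly are left implicit in yours. First, termination: only finitely many subproblems are ever queued, each once, so $Q$ empties and the verdict is verified. Second, the initialization for $L \le m$: there the first subproblems are those of \textsc{Branch}(all layers) rather than $m$-sets of layers, and they still consist of $m$ tree nodes.
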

\noindent The count is polynomial in $\nu$, whereas there are $\binom{K}{m}$ choices of $m$
parameters. For MNIST FC $2{\times}256$ ($\nu = 522$, $L = 3$) and $m = 2$, the bound is
$541{,}321$ bound propagations, against $3.6 \cdot 10^{10}$ pairs of parameters.
\autoref{app:bab-count} gives the proof, whose counting is checked in Lean.

\section{Evaluation}\label{sec:eval}

\subsection{Benchmarks and Setup}

\paragraph{Benchmarks and Fine-tuning.}
We evaluate \tool{} on seven architectures~\citep{brix2023first,kaulen20256th,zhang2025bfaverifier}: three ACAS Xu networks ($6{\times}50$), three fully connected MNIST networks ($2{\times}256$, $4{\times}256$, $7{\times}250$), and a CIFAR-10 network ($7{\times}250$, 1.15M parameters).

Following BFAVerifier~\citep{zhang2025bfaverifier}, all networks are quantized to $Q = 8$ bits with $n = 1$ flipped bit per perturbed parameter. Because the pretrained networks are highly vulnerable to single-bit flips (failing all pilot tests at $m = 1$), we fine-tune them~\citep{xu2024training,duong2024harnessing} using a hinge loss on the margin. To bound the perturbation intervals, we clip all weights and biases to $[-c, c]$ after each step. The clip bound $c$ is fixed prior to verification as the smallest value that maintains accuracy within 3\% of the pretrained network (\autoref{app:setup}).

\paragraph{Verification Instances.}

An verification instance pairs one input with $m \in \{1, 2, 4\}$ flipped parameters, evaluated under a 60-second timeout. The input perturbation radius is $0$ or $2/255$ ($0.01$ for ACAS Xu). To ensure fair evaluations, we follow BFAVerifier by selecting the first 15 correctly classified test inputs that bound propagation successfully verifies at $m = 0$. For ACAS Xu, inputs are sampled from the normalized domain using the network's advisory as the label.
No CIFAR-10 image passes this filter at radius $2/255$, so \autoref{tab:main} has 13 network--radius pairs (585 instances), and two pretrained networks add 40 (\autoref{app:setup}).

\paragraph{Baselines} We compare \tool{} against BFAVerifier and an exact MILP extended to $m$ flips. The MILP is the only complete method for $m \ge 2$, building a model with one set of binary variables per parameter; however, if model construction times out on the first instance of a run, all remaining instances for that run are reported as unproved (\autoref{app:setup}).

\subsection{\tool{}'s Performance Compared to Baselines}\label{sec:rq1}

\begin{table}[t]
\centering
\caption{Verified instances out of 15 inputs per network and radius within 60\,s.}
\label{tab:main}
\small
\setlength{\tabcolsep}{2pt}
\begin{adjustbox}{max width=\linewidth}
\begin{tabular}{lc cccc cccc cccc}
\toprule
 & & \multicolumn{4}{c}{$m = 1$} & \multicolumn{4}{c}{$m = 2$} & \multicolumn{4}{c}{$m = 4$} \\
\cmidrule(lr){3-6} \cmidrule(lr){7-10} \cmidrule(lr){11-14}
network & rad. & \toolnobab{} & \tool{} & BFAV & MILP & \toolnobab{} & \tool{} & BFAV & MILP & \toolnobab{} & \tool{} & BFAV & MILP \\
\midrule
\multirow{2}{*}{ACAS Xu 1\_1} & 0 & 0 & 4 & \textbf{5} & 0 & 0 & 0 & -- & -- & 0 & 0 & -- & -- \\
 & 0.01 & 0 & 4 & \textbf{7} & 0 & 0 & 0 & -- & -- & 0 & 0 & -- & -- \\
\midrule
\multirow{2}{*}{ACAS Xu 2\_1} & 0 & 0 & 3 & \textbf{8} & 0 & 0 & 0 & -- & -- & 0 & 0 & -- & -- \\
 & 0.01 & 0 & 0 & \textbf{8} & 0 & 0 & 0 & -- & -- & 0 & 0 & -- & -- \\
\midrule
\multirow{2}{*}{ACAS Xu 3\_3} & 0 & 0 & 4 & \textbf{5} & 0 & 0 & 0 & -- & -- & 0 & 0 & -- & -- \\
 & 0.01 & 0 & 2 & \textbf{5} & 0 & 0 & 0 & -- & -- & 0 & 0 & -- & -- \\
\midrule
\multirow{2}{*}{MNIST FC $2{\times}256$} & 0 & \textbf{15} & \textbf{15} & 0 & 0 & 14 & \textbf{15} & -- & -- & 13 & \textbf{14} & -- & -- \\
 & 2/255 & \textbf{15} & \textbf{15} & 0 & 0 & \textbf{14} & \textbf{14} & -- & -- & 5 & \textbf{14} & -- & -- \\
\midrule
\multirow{2}{*}{MNIST FC $4{\times}256$} & 0 & \textbf{15} & \textbf{15} & 0 & 0 & \textbf{15} & \textbf{15} & -- & -- & 1 & \textbf{15} & -- & -- \\
 & 2/255 & \textbf{15} & \textbf{15} & 0 & 0 & 5 & \textbf{15} & -- & -- & 0 & \textbf{14} & -- & -- \\
\midrule
\multirow{2}{*}{MNIST FC $7{\times}250$} & 0 & \textbf{15} & \textbf{15} & 0 & 0 & \textbf{15} & \textbf{15} & -- & -- & 2 & \textbf{13} & -- & -- \\
 & 2/255 & \textbf{15} & \textbf{15} & 0 & 0 & 5 & \textbf{15} & -- & -- & 0 & \textbf{4} & -- & -- \\
\midrule
CIFAR FC $7{\times}250$ & 0 & 5 & \textbf{14} & 0 & 0 & 0 & \textbf{9} & -- & -- & 0 & 0 & -- & -- \\
\midrule
\multicolumn{2}{l}{\textbf{Total}} & 95 & \textbf{121} & 38 & 0 & 68 & \textbf{98} & -- & -- & 21 & \textbf{74} & -- & -- \\
\bottomrule
\end{tabular}
\end{adjustbox}
\end{table}

\begin{figure}[t]
\centering
\includegraphics[width=\linewidth]{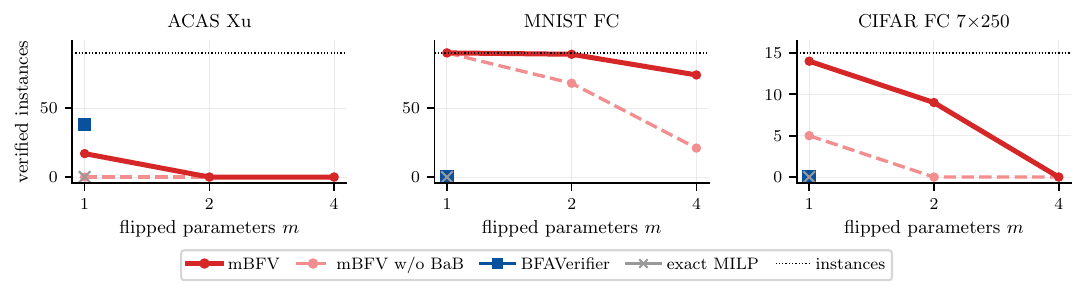}
\caption{Verified instances against the number $m$ of flipped parameters per network family.}
\label{fig:budget}
\end{figure}

\autoref{tab:main} compares \tool{} against an ablation without branch and bound (\toolnobab{}), BFAVerifier (BFAV), and an exact MILP baseline.
At $m = 1$, \tool{} verifies 121 total instances, compared to 38 for BFAV and 0 for MILP.
\tool{} performs particularly well on the larger MNIST and CIFAR networks (269k–1.15M parameters), verifying 104 of 105 instances while both baselines time out.

However, BFAV verifies more instances than \tool{} on the smaller ACAS Xu networks (13k parameters).
Because these networks have relatively few parameters, BFAV's single parameter enumeration remains computationally feasible.
Additionally, their large clip bound ($c = 0.5$) means single bit flips can substantially change outputs.
BFAV handles this shift precisely with exact MILP, whereas \tool{}'s relaxations become too loose in this specific setting.
\autoref{fig:budget} splits these results by network family and shows that branch and bound gains the most on the fully connected MNIST and CIFAR networks.

As illustrated in \autoref{fig:eval}a, BFAV and MILP do not scale beyond a single flip, \tool{} is the only method that is able to verify instances at $m=2$ (98 verified) and $m=4$ (74 verified).
Without BaB, \toolnobab{}'s performance decreases from 95 verified instances at $m=1$ to 21 at $m=4$.
In contrast, \tool{} partitions the search space to maintain tighter bounds, allowing it to verify more instances, \eg, on MNIST FC $4{\times}256$ ($m=4$, radius 0), \tool{} verifies all 15 instances while \toolnobab{} verifies only 1.
The exact MILP cannot verify any instances with $m \ge 2$.

\subsection{Effectiveness of Branch and Bound}\label{sec:rq2}

\begin{figure}[t]
  \centering
  \includegraphics[width=\linewidth]{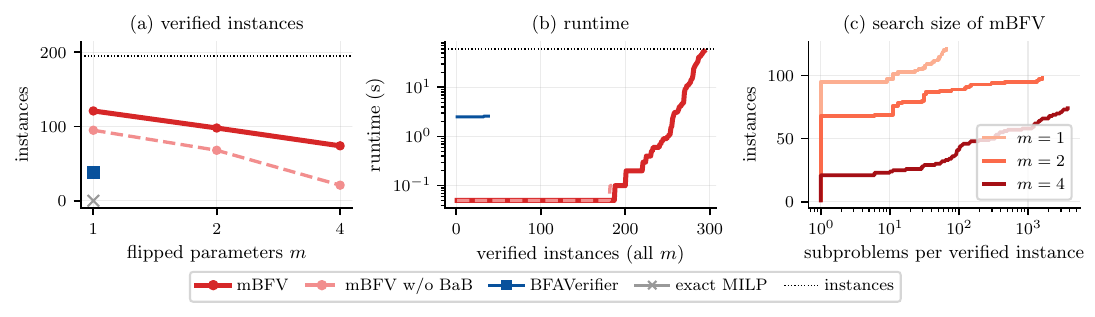}
  \caption{Verified instances, runtime, and search size of \tool{} over all networks.}
  \label{fig:eval}
  \end{figure}

\autoref{fig:eval}b presents verification runtimes across all numbers of flipped parameters.
\toolnobab{} performs only a single bound propagation, which makes it fast but cannot verify many instances.
\tool{} matches \toolnobab{} on easy instances, verifying about 190 instances in under 0.1 seconds. For more complex problems, \tool{} performs branching and computes additional bound propagations, which requires more runtime, scaling progressively up to the 60 second timeout.
BFAVerifier requires higher base latency, requiring over 2.5 seconds even for its easiest instances, and verifies only 38 instances in total, while the exact MILP does not verify any within the time limit.

\autoref{fig:eval}c illustrates the search space behavior by plotting the evaluated subproblems per verified instance.
The vertical lines at $10^0$ represents instances solved entirely at the root node by the continuous relaxation, matching the capabilities of \toolnobab{}.
At $m = 1$, this relaxation is tight enough to verify nearly 100 instances without branching.
As the number of flipped parameters increases, multiple parameter flips widen the downstream bounds, reducing the number of instances verified at the root to 68 at $m = 2$ and 21 at $m = 4$.
For instances that require search space partitioning, the necessary depth scales substantially with $m$.
Verifying the most challenging inputs requires exploring fewer than 100 subproblems at $m = 1$, nearly 2000 at $m = 2$, and upwards of 4000 at $m = 4$.

\section{Related Work}\label{sec:related}

\paragraph{Bit-flip attacks and defenses.}Existing defenses against bit-flip attacks primarily rely on empirical techniques, such as triplicating critical weights~\citep{schmedding2024aspis}, planting honey neurons to attract and detect flips~\citep{liu2023neuropots}, or protecting weights with error-detecting codes~\citep{velcicky2026deepncode}. However, these approaches do not provide formal verification certificates. Furthermore, in our theoretical setting, introducing such error-detecting codes would merely alter the reachable set $R_n$ in \autoref{eq:reach} without fundamentally addressing the need for formal guarantees.

\paragraph{Verifying quantized networks.}Exact verifiers for quantized networks typically assume fixed weights and verify robustness against input perturbations. These methods rely on bit-precise SMT encodings~\citep{giacobbe2020how,henzinger2021quantized}, integer linear programming formulations like QVIP~\citep{zhang2022qvip} and EQV~\citep{huang2024towards}, or bounded model checking as in QNNVerifier~\citep{song2024automated}. Crucially, these frameworks target quantized activations, which differs from the BFAVerifier threat model adopted in our work. While BFAVerifier~\citep{zhang2025bfaverifier} introduces a symbolic-parameter domain and a MILP formulation to certify network weights, it is strictly limited to verifying flips in a single parameter. In contrast, our approach scales to verify simultaneous bit flips across any $m$ parameters.

\paragraph{Bound propagation and branch and bound.}State-of-the-art verification frameworks, such as CROWN~\citep{zhang2018crown,xu2020automatic,wang2021beta} and DeepPoly~\citep{singh2019deeppoly}, efficiently compute linear bounds via back-substitution. Verification beyond input-norm balls covers structural~\citep{duong2026verifying} and dual perturbations~\citep{duong2026verifying2}, whereas we perturb the parameters. The integrality property underpinning \autoref{prop:topm} is standard in classical combinatorial optimization~\citep{schrijver2003combinatorial}, and we leverage this property in every concretization step of our bound propagation. Finally, our second core contribution, branch and bound over perturbation locations, is orthogonal to existing activation-space branching methods.
Consequently, it could be naturally combined with strategies like the ReLU branching employed in $\beta$-CROWN~\citep{wang2021beta}, the DPLL(T)-style search with clause learning in NeuralSAT~\citep{duong2023dpll,duong2025neuralsat,duong2025neuralsat2}, or learned branching policies~\citep{duong2026verifying3}.
Generating and checking proofs of verification results~\citep{duong2026generating} certifies each verification run, whereas we mechanize the soundness of the method once in Lean.

\section{Conclusion}

We presented \tool{} to verify quantized networks against $m$-parameter bit-flip perturbations, surpassing the $m=1$ limit of BFAVerifier. By capturing $\binom{K}{m}$ choices via a single multi-bitflip bound propagation and branching over perturbation locations, \tool{} requires at most $1 + \binom{2\nu - L}{m}$ propagations, verifying 293 out of 625 problems.

The current limitations of \tool{} include the lack of support for certain operations such as max pooling and residual connections. Additionally, we observe a tradeoff between accuracy and robustness, with accuracy decreasing from 98.1\% to 96.2\% on the MNIST fully connected $2{\times}256$ architecture.
In future work, we plan to distribute the $m$ bit-flips across multiple layers, enabling verification of perturbations spread over several layers by solving parallel single-layer subproblems with fewer flips, in the spirit of compositional verification~\citep{duong2026compositional}. We also aim to incorporate tighter $\alpha$-CROWN~\citep{xu2020automatic} relaxations within each subproblem.

\bibliographystyle{unsrtnat}
\bibliography{paper}

\appendix

\section{Branch and Bound: Worked Example and Soundness}\label{app:bab}

This appendix works out a five-neuron example (\autoref{app:bab-example}), proves that the subproblems cover every perturbation (\autoref{app:bab-cover}), and proves \autoref{prop:bab-count} (\autoref{app:bab-count}).

\subsection{Worked Example: The Subproblems Cover the Integer Program}\label{app:bab-example}

\paragraph{Neural Network.}
Fix the input and take one hidden layer of five neurons with the margin
\begin{equation}\label{eq:ex-net}
  \hat z_j = 1 + d_j \quad (j = 1, \dots, 5), \qquad z_j = \mathrm{ReLU}(\hat z_j), \qquad
  f = z_1 + z_2 + z_3 + z_4 + z_5 - \tfrac52
\end{equation}
Only the biases $b_1, \dots, b_5$ are perturbable, so neuron $\hat z_j$ has the single parameter $b_j$ and nominal value $1$.
The change $d_j$ of $b_j$ lies in the change interval~\autoref{eq:hull}, here $[\ell, u] = [-3, 0.5]$, and at most $m = 2$ parameters flip.
With more parameters per neuron the argument is the same, and only the ranges of the $\hat z_j$ grow.

\paragraph{Integer program.}
A perturbation selects $s \in \mathcal{S}_2 = \{ s \in \{0,1\}^5 : \sum_j s_j \le 2 \}$ and changes $d_j \in [-3 s_j, 0.5 s_j]$, so an unselected bias keeps $d_j = 0$.
Problem~\eqref{eq:goal} becomes the integer program
\begin{equation}\label{eq:ex-ip}
  f^\star = \min_{s \in \mathcal{S}_2} F(s), \qquad
  F(s) = \min_{d_j \in [-3 s_j,\, 0.5 s_j]} \ \sum_j \mathrm{ReLU}(1 + d_j) - \tfrac52
       = 5 - \textstyle\sum_j s_j - \tfrac52
\end{equation}
since a perturbed neuron drops to $\mathrm{ReLU}(1 - 3) = 0$ and an unperturbed neuron stays at $1$.
The ten pairs are the worst of the 16 selections, so $f^\star = 0.5 > 0$ and the instance is verified.

\paragraph{Bounding.}
$\textsc{Bound}(S)$ is multi-bitflip bound propagation with only the biases in $U = \bigcup S$ perturbable.
For $j \notin U$ the range of $\hat z_j$ is $[1, 1]$, so $z_j = 1$ exactly.
For $j \in U$ the range $[-2, 1.5]$ makes the neuron unstable, and CROWN takes the lower line $z_j \ge 0$ since the upper line has slope $1.5/3.5 < 0.5$.
$\textsc{Bound}(S)$ is therefore the optimum of the linear program
\begin{equation}\label{eq:ex-lp}
  R(U) \;=\; \min\ \sum_j z_j - \tfrac52
  \quad \text{s.t.} \quad z_j \ge 0 \ (j \in U), \qquad z_j = 1 \ (j \notin U)
\end{equation}
Hence $R(U) = 5 - |U| - \tfrac52$, so a subproblem is proved exactly when it has at most two neurons.

At the root, where all five neurons are perturbable, $R(\{1,\dots,5\}) = -2.5 < 0$ although $f^\star = 0.5$.
In the integer program~\eqref{eq:ex-ip}, $z_j = 0$ needs $s_j = 1$, so at most two neurons drop, whereas the linear program~\eqref{eq:ex-lp} sets all five $z_j$ to $0$ at once.
This gap is the loss described in \autoref{sec:bab}, where the range of each neuron is computed as if it alone were perturbed.
The selection $s$ and the changes $d$ enter only through $\hat z_j$, which the line $z_j \ge 0$ does not involve, so the constraint $\sum_j s_j \le 2$ never reaches it.
Every term $g_j$ is therefore $0$, and the top-$2$ selection adds nothing.
\autoref{prop:topm} makes the selection exact given the terms, whereas the loss lies in the ReLU lines.

\paragraph{Branching.}
For a set of neurons $U$, let $\mathcal{S}_2(U)$ be the selections with $s_j = 0$ for $j \notin U$.
Every perturbation inside $U$ is feasible for~\eqref{eq:ex-lp}, since $z_j = \mathrm{ReLU}(\hat z_j) \ge 0$, so
\begin{equation}\label{eq:ex-sound}
  R(U) \;\le\; \min_{s \in \mathcal{S}_2(U)} F(s)
\end{equation}
The neurons $U_1, \dots, U_q$ of the subproblems \emph{cover} the integer program if $\mathcal{S}_2 = \mathcal{S}_2(U_1) \cup \dots \cup \mathcal{S}_2(U_q)$.
For a cover, taking the minimum of~\eqref{eq:ex-ip} over each $\mathcal{S}_2(U_t)$ and applying~\eqref{eq:ex-sound} to each gives
\begin{equation}\label{eq:ex-cover}
  f^\star \;=\; \min_{t \le q}\ \min_{s \in \mathcal{S}_2(U_t)} F(s) \;\ge\; \min_{t \le q} R(U_t)
\end{equation}
Proving $R(U_t) > 0$ for every $t$ thus proves $f^\star > 0$, even if the sets $\mathcal{S}_2(U_t)$ overlap.
Each subproblem is a smaller linear program with fewer relaxed ReLUs.
By \autoref{prop:cover-step}, the children that $\textsc{Branch}$~\eqref{eq:branch} returns cover their parent.

\paragraph{The search.}
The search proves the instance at depth 3 after 16 bound propagations (\autoref{fig:ex-tree5}).
Each subproblem shows its groups above its bound $R$~\eqref{eq:ex-lp}, in red if it fails, in green if it is proved, and dashed if it repeats an earlier one and is not bounded again.
Gray boxes show how $\textsc{Branch}$~\eqref{eq:branch} splits the largest group until $T$ has more than $m$ groups (1) and returns every set of $m$ groups of $T$ as a child (2).
The network has one perturbable layer, so the root is the single group $\{1,\dots,5\}$ and is branched directly.
Its first split gives $|T| = 2 = m$, so $\textsc{Branch}$ also splits $\{3,4,5\}$.
The neuron sets $\{1,2,3\}$, $\{1,2,4,5\}$, and $\{3,4,5\}$ of the three depth-1 subproblems contain all ten pairs, so they cover the integer program.
Every pair subproblem has $R = 0.5 > 0$ and is proved, and $\{\{4\},\{5\}\}$ is generated twice but bounded once.
All ten pairs are then proved and the answer is verified, with the bound $0.5 = f^\star$ by~\eqref{eq:ex-cover}.
The pair subproblems are exact here because they equal the worst selections.
In a deeper network, a pair subproblem still relaxes the ReLUs of its two neurons and of later layers, so its bound can stay below the integer optimum.

\begin{remark}[Lean mechanization of the worked example]\label{rem:lean-example}
The worked example is mechanized in \texttt{Proof/Example.lean}, with the five neurons indexed by \texttt{Fin 5}.
\texttt{F\_isLeast}, \texttt{fstar\_isLeast}, and \texttt{R\_eq} prove $F(s) = 5 - |s| - \tfrac52$ in~\eqref{eq:ex-ip}, $f^\star = 0.5$, and the optimum $R(U)$ of~\eqref{eq:ex-lp}.
\texttt{proved\_iff} proves that a subproblem is proved exactly when it has at most two neurons.
\texttt{ex\_sound} proves~\eqref{eq:ex-sound} assuming only $s \subseteq U$, and \texttt{ex\_cover} proves~\eqref{eq:ex-cover} for any finite nonempty family of subproblems that covers the selections.
\texttt{depth1\_cover} and \texttt{tab\_ex\_cover} check that the depth-1 subproblems cover the ten pairs, and \texttt{fstar\_cover} proves that the ten pair subproblems give the bound $0.5 = f^\star$.
The proofs mainly use \texttt{decide} on the finite set of selections, \texttt{rw}, and \texttt{norm\_num}, with \texttt{linarith} closing the inequalities.
They contain no \texttt{sorry} and introduce no axioms beyond Lean's standard \texttt{propext}, \texttt{Classical.choice}, and \texttt{Quot.sound}.
\end{remark}

\begin{figure}[h]
\centering
\resizebox{\linewidth}{!}{%
\begin{tikzpicture}[font=\footnotesize,
    subproblem/.style={draw, rounded corners=2pt, inner sep=2pt, align=center, minimum width=13mm},
    fail/.style={subproblem, fill=red!15}, ok/.style={subproblem, fill=green!20},
    dup/.style={subproblem, dashed},
    split/.style={draw=black!40, fill=black!3, inner sep=2pt, align=center, font=\scriptsize},
    arr/.style={-latex}]
  \node[fail] (r) at (0, 0) {$\{1,2,3,4,5\}$\\$-2.5$};
  \node[split] (tr) at (0, -1.35)
    {(1) split: $T = \{1,2\}\,\{3,4,5\}$, $|T| = 2 \le m$;
     split $\{3,4,5\}$: $T = \{1,2\}\,\{3\}\,\{4,5\}$\\
     (2) every 2 of the 3 groups of $T$};
  \node[fail] (a) at (-5.1, -2.8) {$\{1,2\}\,\{3\}$\\$-0.5$};
  \node[fail] (b) at (0, -2.8)     {$\{1,2\}\,\{4,5\}$\\$-1.5$};
  \node[fail] (c) at (5.1, -2.8)  {$\{3\}\,\{4,5\}$\\$-0.5$};
  \node[split] (ta) at (-5.1, -4.0) {(1) $T = \{1\}\,\{2\}\,\{3\}$\\(2) every 2 of 3};
  \node[split] (tb) at (0, -4.0)     {(1) split $\{4,5\}$: $T = \{1,2\}\,\{4\}\,\{5\}$\\(2) every 2 of 3};
  \node[split] (tc) at (5.1, -4.0)  {(1) $T = \{3\}\,\{4\}\,\{5\}$\\(2) every 2 of 3};
  \node[ok]   (a1) at (-6.8, -5.3)  {$\{1\}\,\{2\}$\\$0.5$};
  \node[ok]   (a2) at (-5.1, -5.3) {$\{1\}\,\{3\}$\\$0.5$};
  \node[ok]   (a3) at (-3.4, -5.3)  {$\{2\}\,\{3\}$\\$0.5$};
  \node[fail] (b1) at (-1.7, -5.3) {$\{1,2\}\,\{4\}$\\$-0.5$};
  \node[fail] (b2) at (0, -5.3)     {$\{1,2\}\,\{5\}$\\$-0.5$};
  \node[ok]   (b3) at (1.7, -5.3)  {$\{4\}\,\{5\}$\\$0.5$};
  \node[ok]   (c1) at (3.4, -5.3)   {$\{3\}\,\{4\}$\\$0.5$};
  \node[ok]   (c2) at (5.1, -5.3)  {$\{3\}\,\{5\}$\\$0.5$};
  \node[dup]  (c3) at (6.8, -5.3)   {$\{4\}\,\{5\}$\\repeat};
  \node[split] (tb1) at (-3.0, -6.5) {(1) $T = \{1\}\,\{2\}\,\{4\}$\\(2) every 2 of 3};
  \node[split] (tb2) at (2.0, -6.5)  {(1) $T = \{1\}\,\{2\}\,\{5\}$\\(2) every 2 of 3};
  \node[dup]  (d1) at (-4.6, -7.8) {$\{1\}\,\{2\}$\\repeat};
  \node[ok]   (d2) at (-3.0, -7.8) {$\{1\}\,\{4\}$\\$0.5$};
  \node[ok]   (d3) at (-1.4, -7.8) {$\{2\}\,\{4\}$\\$0.5$};
  \node[dup]  (e1) at (0.4, -7.8)  {$\{1\}\,\{2\}$\\repeat};
  \node[ok]   (e2) at (2.0, -7.8)  {$\{1\}\,\{5\}$\\$0.5$};
  \node[ok]   (e3) at (3.6, -7.8)  {$\{2\}\,\{5\}$\\$0.5$};
  \draw[arr] (r) -- (tr);
  \foreach \c in {a, b, c} \draw[arr] (tr.south) -- (\c.north);
  \draw[arr] (a) -- (ta); \draw[arr] (b) -- (tb); \draw[arr] (c) -- (tc);
  \foreach \c in {a1, a2, a3} \draw[arr] (ta.south) -- (\c.north);
  \foreach \c in {b1, b2, b3} \draw[arr] (tb.south) -- (\c.north);
  \foreach \c in {c1, c2, c3} \draw[arr] (tc.south) -- (\c.north);
  \draw[arr] (b1.south) -- (tb1.north); \draw[arr] (b2.south) -- (tb2.north);
  \foreach \c in {d1, d2, d3} \draw[arr] (tb1.south) -- (\c.north);
  \foreach \c in {e1, e2, e3} \draw[arr] (tb2.south) -- (\c.north);
  \foreach \y/\d in {0/0, -2.8/1, -5.3/2, -7.8/3}
    \node[font=\scriptsize, text=black!60, anchor=west] at (7.7, \y) {depth \d};
\end{tikzpicture}}
\caption{Branch and bound on the five-neuron example.}
\label{fig:ex-tree5}
\end{figure}

\FloatBarrier

\subsection{Why the Cover Is Sound}\label{app:bab-cover}

A perturbation $\rho$ \emph{lies in} a subproblem $S$ when every parameter it changes belongs to $\bigcup S$.
Soundness of the verdict verified rests on one invariant, which the branching maintains.

\begin{proposition}[Children cover their parent]\label{prop:cover-step}
Let $S$ be a subproblem, $G \in S$ a group, and $G = G_1 \uplus G_2$ a split of $G$ into two disjoint non-empty parts, so that $T = (S \setminus \{G\}) \cup \{G_1, G_2\}$ has pairwise disjoint groups and $\bigcup T = \bigcup S$.
If $|T| > m$, every $(m,n)$-perturbation that lies in $S$ lies in one of the children $\{C \subseteq T : |C| = m\}$.
\end{proposition}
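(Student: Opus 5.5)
The argument is a pigeonhole count followed by padding. Fix an $(m,n)$-perturbation $\rho = \{(\alpha_t, P_t)\}$ that lies in $S$, and let $A$ be the set of its parameters $\alpha_t$ with $P_t \neq \emptyset$ (the ones it actually changes), so $|A| \le m$. Each parameter is an incoming weight or the bias of exactly one neuron (\autoref{fig:neurons}a), so each $\alpha \in A$ has a well-defined owner neuron $\eta(\alpha)$. Because $\rho$ lies in $S$, we have $\eta(\alpha) \in \bigcup S = \bigcup T$. The groups of $T$ are pairwise disjoint, so $\eta(\alpha)$ lies in exactly one group $H(\alpha) \in T$. Let $C_0 = \{H(\alpha) : \alpha \in A\} \subseteq T$. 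Then $|C_0| \le |A| \le m$, and every parameter changed by $\rho$ belongs to a neuron of $\bigcup C_0$.

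Next I pad $C_0$ to exactly $m$ groups. Since $|T| > m \ge |C_0|$, the set $T \setminus C_0$ has at least $m - |C_0|$ elements. I add any $m - |C_0|$ of them to obtain $C$ with $C_0 \subseteq C \subseteq T$ and $|C| = m$. The groups of $C$ are disjoint, being a subfamily of $T$, and there are at most $m$ of them, so $C$ is a valid subproblem and one of the children $\{C \subseteq T : |C| = m\}$. Since $\bigcup C_0 \subseteq \bigcup C$, every parameter changed by $\rho$ belongs to $\bigcup C$, which means $\rho$ lies in $C$.

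The facts about $T$ assumed in the statement (disjointness and $\bigcup T = \bigcup S$) follow directly from $G = G_1 \uplus G_2$ and the disjointness of $S$, so I only note them. The main subtlety is bookkeeping rather than depth. First, "lies in" must be read as membership of each changed parameter's owner neuron in $\bigcup S$, so pairs with $P_t = \emptyset$ impose nothing. Second, the hypothesis $|T| > m$ is exactly what makes the padding step possible. When $|T| \le m$ there is no child of size $m$, which is why \autoref{eq:branch} splits again. The empty perturbation is also covered, since then $C_0 = \emptyset$ and any $m$-subset of $T$ works; such a subset exists because $|T| > m$.
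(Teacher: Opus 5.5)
Your proof is correct and is essentially the paper's argument: your $C_0$ is the paper's set $D$ of groups containing a changed parameter, assigning each changed parameter to its unique group of $T$ gives $|C_0| \le m$, and padding $C_0$ to an $m$-subset of $T$ yields the covering child. One small correction to your closing remark: the padding step needs only $|T| \ge m$, which is all the paper's proof and its Lean version assume. Also, when $|T| = m$ a child does exist, namely $T$ itself; \textsc{Branch} splits again in that case because this child allows the same neurons as $S$, not because no child exists.
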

\begin{proof}
Let $\rho$ lie in $S$ and let $D \subseteq T$ be the groups that contain a parameter changed by $\rho$.
The groups of $T$ are pairwise disjoint, so each changed parameter lies in exactly one group of $T$.
Hence $|D| \le m$, because $\rho$ changes at most $m$ parameters.
Since $|T| \ge m$, choose any $C$ with $D \subseteq C \subseteq T$ and $|C| = m$.
This $C$ is one of the children, and every parameter changed by $\rho$ lies in $\bigcup D \subseteq \bigcup C$.
\end{proof}

\begin{remark}[Lean mechanization of \autoref{prop:cover-step} (Children cover their parent)]\label{rem:lean-cover-step}
\texttt{children\_cover} in \texttt{Proof/Cover.lean} proves \autoref{prop:cover-step} assuming only $m \le |T|$ and that the parameters changed by $\rho$ lie in $\bigcup T$.
It picks one group of $T$ for each changed parameter and extends these at most $m$ groups to $m$ groups of $T$.
\texttt{cover\_step} restates it for the split $T$, and \texttt{child\_ssubset} proves that a child of pairwise disjoint non-empty groups allows strictly fewer parameters than its parent.
The proofs mainly use \texttt{obtain}, \texttt{rw}, and induction over finite sets, contain no \texttt{sorry}, and add no axioms beyond \texttt{propext}, \texttt{Classical.choice}, and \texttt{Quot.sound}.
\end{remark}

If $|T| \le m$, the algorithm splits again before bounding anything.
Each split keeps the groups disjoint and $\bigcup T = \bigcup S$, and it adds one group to $T$.
After finitely many splits, either $|T| > m$ and \autoref{prop:cover-step} applies, or every group is a single neuron and the subproblem is a leaf.

\begin{proposition}[Cover invariant]\label{prop:cover}
At the start of every iteration of \autoref{alg:bab}, every $(m,n)$-perturbation lies in some subproblem of $Q \cup P$.
\end{proposition}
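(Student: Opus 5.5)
The invariant is proved by induction on the number of iterations of the while loop in \autoref{alg:bab}. We only need to consider runs that reach line~\ref{ln:bab-init}, since a root that verifies returns immediately. \textbf{Base case.} Initially $P=\emptyset$ and $Q=\{S\subseteq\{\text{layers}\}: |S|=m\}$, where each layer is one group. Take any $(m,n)$-perturbation $\rho$ and let $D$ be the set of layers that contain a parameter changed by $\rho$. Since $\rho$ changes at most $m$ parameters and each parameter belongs to exactly one layer, $|D|\le m$. If $L\ge m$, extend $D$ to some $C$ with $D\subseteq C$ and $|C|=m$; then $C\in Q$ and $\rho$ lies in $C$. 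This is the same argument as in \autoref{prop:cover-step}. If $L<m$, read $Q$ as the single subproblem containing all layers, which is how the algorithm must treat this case for $Q$ to be nonempty; $\rho$ lies in it trivially.

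\textbf{Inductive step.} Suppose the invariant holds at the start of an iteration, so some $S^\ast\in Q\cup P$ contains $\rho$. Let $S$ be the subproblem that is popped. If $S^\ast\neq S$, then $S^\ast$ is still in $Q\cup P$ after the iteration. This holds because elements leave $Q$ only when they are popped, and $P$ only grows; the filter $\setminus\mathcal G$ removes items only from what is added, never from what is already there. So assume $S^\ast=S$. The iteration ends in one of three ways.
\begin{enumerate}[nosep,label=(\roman*)]
\item $\textsc{Bound}(S)>0$. Then $S$ is moved to $P$ and $\rho$ stays covered.
\item The leaf case, which returns \emph{unknown}, or a timeout. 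No later iteration starts, so there is nothing to check.
\item Branching. By \autoref{prop:cover-step}, applied after repeated splits until $|T|>m$ as argued in the paragraph preceding this proposition, $\rho$ lies in some child $C\in\textsc{Branch}(S)$. If $C\notin\mathcal G$, then $C$ is added to $Q$. If $C\in\mathcal G$, then $C$ was generated earlier, and we must show it is still in $Q\cup P$ or is otherwise accounted for.
\end{enumerate}

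The main obstacle is case (iii) with $C\in\mathcal G$, together with the pop rule that picks $S$ with $\bigcup S\not\subseteq\bigcup S'$ for all $S'\in P$. My plan is to strengthen the invariant. \emph{Strengthened invariant:} every $(m,n)$-perturbation lies in some subproblem that is either in $Q\cup P$, or is dominated by some $S'\in P$ in the sense that its neurons are contained in those of $S'$. Because dominated subproblems are verified implicitly, this weaker form is exactly what the soundness argument needs. Containment of neuron sets is transitive, and a perturbation lying in $C$ also lies in any $S'$ with $\bigcup C\subseteq\bigcup S'$, so domination is preserved as $P$ grows.

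It remains to handle an earlier-generated $C$. Every element of $\mathcal G$ was placed in $Q$ when first generated, and after that it was either still in $Q$, or popped. A popped subproblem was then verified and put in $P$, or branched, in which case its own children cover it by the same step applied at that earlier time. To make this recursion terminate, I would argue by induction on the number of neurons $|\bigcup C|$. By \texttt{child\_ssubset}, each child has strictly fewer neurons than its parent, so the recursion bottoms out, and every perturbation in $C$ is covered by something in $Q\cup P$ or dominated by $P$. I would therefore prove the strengthened statement for all $C\in\mathcal G$ at once, and deduce the proposition from it; the skip rule is what makes the dominated case necessary.
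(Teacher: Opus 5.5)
Your argument is correct, but you organize the induction differently from the paper.

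\textbf{The paper's route.} The paper strengthens the statement to a state invariant over the generated family: every perturbation lying in some $H \in \mathcal{G}$ lies in a subproblem of $P$, or in a queued $S$ with $\bigcup S \subseteq \bigcup H$. It proves this by induction over iterations. The containment clause is exactly what removes your obstacle. For an earlier-generated child $C$ of the popped $S$, the invariant gives a queued witness inside $\bigcup C \subsetneq \bigcup S$. That witness therefore is not $S$ and survives the pop.

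\textbf{Your route.} You fix an iteration and prove, by strong induction on $|\bigcup C|$, that every perturbation lying in any $C \in \mathcal{G}$ lies in a member of $Q \cup P$. You do this by following the history of $C$. It was queued exactly when generated and left $Q$ only by being popped. It was then proved, skipped as dominated by $P$, or branched into strictly smaller children; those children were all added to $\mathcal{G}$ and cover $C$. A leaf or a timeout would have ended the run.

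\textbf{Comparison.} Your argument is sound and needs no containment bookkeeping. It actually makes your outer induction on iterations unnecessary, since the initial subproblems belong to $\mathcal{G}$ and already cover every perturbation. The cost is that it reasons about the whole execution history rather than the current state $(Q, P, \mathcal{G})$. The paper's clause $\bigcup S \subseteq \bigcup H$ records exactly the part of that history that is needed. This is what allows a one-step induction over reachable states, as in the Lean mechanization.

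\textbf{Two small points.} First, the dominated disjunct in your first strengthened invariant adds nothing: a perturbation lying in a subproblem dominated by $S' \in P$ already lies in $S'$. The strengthening that does the work is the quantification over all of $\mathcal{G}$. Second, for $L < m$ you read the initial queue as the single subproblem of all layers, whereas the paper uses the children of $\textsc{Branch}$ applied to all layers. Both starts cover every perturbation, so this does not affect the argument.
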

\begin{proof}
Let $\mathcal{G}$ be the set of subproblems generated so far, the initial subproblems and all children.
We prove a stronger invariant: every $(m,n)$-perturbation that lies in some $H \in \mathcal{G}$ lies in a subproblem of $P$, or in a subproblem $S \in Q$ with $\bigcup S \subseteq \bigcup H$.
\emph{Initially}, $\mathcal{G} = Q$, and $H$ itself is such a subproblem.
The layers are pairwise disjoint and cover all $K$ parameters, so the argument of \autoref{prop:cover-step} with $T = \{\text{layers}\}$ puts every perturbation in an initial subproblem.
If $L \le m$, the initial subproblems are those of $\textsc{Branch}$(all layers), which cover every perturbation by the same proposition.
The invariant therefore implies the proposition.
\emph{Step}: a subproblem $S$ leaves $Q$ in exactly three ways.
(i) $\textsc{Bound}(S) > 0$, and $S$ enters $P$.
(ii) The neurons of $S$ lie inside those of a proved $S' \in P$, so $\bigcup S \subseteq \bigcup S'$ and what lies in $S$ lies in $S'$.
(iii) $\textsc{Bound}(S) \le 0$, and $S$ is replaced by its children, which cover $S$ by \autoref{prop:cover-step}, and each child $C$ satisfies $\bigcup C \subsetneq \bigcup S$ because the groups are disjoint and non-empty.
A perturbation that lay in $S$ lies in some child $C$, which is queued if it is new.
If $C$ was generated before, the invariant for $C$ gives a proved subproblem, or a queued subproblem inside $\bigcup C \subsetneq \bigcup S$, which is therefore not $S$ and stays queued.
\end{proof}

\begin{remark}[Lean mechanization of \autoref{prop:cover} (Cover invariant)]\label{rem:lean-cover}
\texttt{cover\_invariant} in \texttt{Proof/Cover.lean} proves \autoref{prop:cover} assuming only that the initial subproblems cover every perturbation of at most $m$ parameters (\texttt{init\_cover}).
The search is modeled by the steps (i) to (iii), where a branch step queues only the children not generated before, and \texttt{branch\_step} proves that $\textsc{Branch}$ on pairwise disjoint non-empty groups is such a step.
\texttt{inv\_step} and \texttt{inv\_reach} prove that every step preserves the invariant of the proof above, which thus holds in every reachable state.
The proofs mainly use \texttt{obtain}, \texttt{by\_cases}, and induction over reachable states, contain no \texttt{sorry}, and add no axioms beyond \texttt{propext}, \texttt{Classical.choice}, and \texttt{Quot.sound}.
\end{remark}

\begin{proposition}[Soundness of the search]\label{prop:bab-sound}
If \autoref{alg:bab} returns verified, then $f^\star > 0$, and $\min_{S \in P} \textnormal{\textsc{Bound}}(S)$ is a lower bound on $f^\star$.
\end{proposition}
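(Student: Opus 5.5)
The argument splits according to the two exits of \autoref{alg:bab} that return verified. Each case combines the cover invariant (\autoref{prop:cover}) with the per-subproblem soundness of $\textsc{Bound}$. That soundness was established in \autoref{sec:budget}: for every subproblem $S$ and every perturbation $\rho$ lying in $S$, $\textsc{Bound}(S) \le \min_{x \in \mathcal{B}} f^\rho(x)$. The reason is that restricting flips to $\bigcup S$ only sets $s_p = 0$ for parameters outside $\bigcup S$, and the induction over layers goes through unchanged. Applied to the root, whose groups contain all neurons, this gives $\textsc{Bound}(\text{all neurons}) \le f^\star$. So if the algorithm returns at \autoref{ln:bab-root}, then $f^\star > 0$. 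In this case $P$ is empty, so I would read the lower-bound claim as applying to $P \cup \{\text{root}\}$, or treat the root as placed into $P$ by convention.

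In the main case the loop exits at \autoref{ln:bab-ret} with $Q = \emptyset$. The run terminated, so the time-limit and leaf branches did not fire, and \autoref{prop:cover} holds at the start of the final (non-)iteration. It gives that every $\rho \in \mathcal{P}_{m,n}$ lies in some $S \in Q \cup P = P$. Each $S \in P$ entered $P$ only through $\textsc{Bound}(S) > 0$.

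The chain of inequalities is then the following. For any $\rho$ and $x \in \mathcal{B}$, choose $S_\rho \in P$ containing $\rho$. Then
\[
f^\rho(x) \;\ge\; \textsc{Bound}(S_\rho) \;\ge\; \min_{S \in P} \textsc{Bound}(S).
\]
Minimizing over $\rho$ and $x$ yields $f^\star \ge \min_{S\in P}\textsc{Bound}(S)$, exactly as in \eqref{eq:ex-cover}. Since $P$ is finite and each term is positive, the minimum is positive, so $f^\star > 0$.

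I expect the main obstacle to be making the per-subproblem soundness precise. Subproblems skipped because $\bigcup S \subseteq \bigcup S'$ never enter $P$, but they are already handled inside \autoref{prop:cover}. What remains is to confirm that $\textsc{Bound}(S)$ is sound for perturbations confined to $\bigcup S$ rather than arbitrary ones. I would argue this by repeating the induction of \autoref{sec:budget} with $\mathcal{S}_m$ replaced by $\{s \in \mathcal{S}_m : s_p = 0 \text{ for } p \notin \bigcup S\}$. \autoref{prop:topm} still applies after deleting the excluded coordinates, because setting their terms to zero only raises the minimum relative to the full top-$m$, so using the restricted top-$m$ remains a valid lower bound.
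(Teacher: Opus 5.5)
Your proof is correct and follows the paper's argument. At the exit with $Q = \emptyset$, \autoref{prop:cover} places every $\rho \in \mathcal{P}_{m,n}$ in some $S \in P$, and soundness of $\textsc{Bound}(S)$ for perturbations lying in $S$, combined with $\textsc{Bound}(S) > 0$, gives $f^\star \ge \min_{S \in P} \textsc{Bound}(S) > 0$. Your separate treatment of the early return at \autoref{ln:bab-root}, where $P = \emptyset$ and $f^\star > 0$ must come directly from $\textsc{Bound}(\text{all neurons}) \le f^\star$, is a real refinement: the paper's proof just asserts that verified is returned only with $Q = \emptyset$, and its Lean version assumes $P \neq \emptyset$ for the lower-bound claim.
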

\begin{proof}
The algorithm returns verified only with $Q = \emptyset$, so by \autoref{prop:cover} every $(m,n)$-perturbation $\rho$ lies in some $S \in P$.
$\textsc{Bound}(S)$ is multi-bitflip bound propagation with only $\bigcup S$ perturbable with at most $m$ flipped parameters.
By \autoref{sec:topm}, $\textsc{Bound}(S)$ is a lower bound on $f^{\rho'}(x)$ for every $x \in \mathcal{B}$ and every $(m,n)$-perturbation $\rho'$ lying in $S$, and in particular for $\rho$.
Since $\textsc{Bound}(S) > 0$ for every $S \in P$, the claim follows.
\end{proof}

\begin{remark}[Lean mechanization of \autoref{prop:bab-sound} (Soundness of the search)]\label{rem:lean-bab-sound}
\texttt{search\_sound} and \texttt{search\_lower\_bound} in \texttt{Proof/Cover.lean} prove \autoref{prop:bab-sound} from \texttt{cover\_invariant}, assuming covering initial subproblems, a stop with $Q = \emptyset$, and that $\textsc{Bound}(S)$ bounds the margin from below under every perturbation that lies in $S$, which is the soundness of multi-bitflip bound propagation (\autoref{rem:lean-terms}).
\texttt{search\_sound} adds $\textsc{Bound}(S) > 0$ for $S \in P$ and concludes a positive margin, and \texttt{search\_lower\_bound} adds $P \neq \emptyset$ and concludes the lower bound $\min_{S \in P} \textsc{Bound}(S)$.
The proofs mainly use \texttt{obtain}, contain no \texttt{sorry}, and add no axioms beyond \texttt{propext}, \texttt{Classical.choice}, and \texttt{Quot.sound}.
\end{remark}

Nothing is claimed for the other two verdicts: unknown reports the bound of the failing leaf, and timeout reports the bound of the root, which holds with all $K$ parameters perturbable.

A ReLU phase split~\citep{wang2021beta} \emph{partitions} its parent, whereas our children \emph{cover} it, since a perturbation of fewer than $m$ parameters lies in several of them.
A leaf of fixed phases is exact, but a leaf of single neurons is not, because its ReLU lines can still be loose.

\paragraph{Refinement.}
A child allows a subset of its parent's parameters, yet its bound is occasionally worse, since CROWN's lower ReLU line (the slope $0$ or $1$ nearest the upper line) can flip when a recomputed neuron range tightens.
A repository script compares 5{,}124 parent--child pairs from eight networks (three refinement levels, $m \in \{1,2,3\}$, at a point and on a box), and in every pair the child's parameter set is a subset of its parent's.
In 63 pairs (1.2\%), the child's bound is worse, and the largest drop is $4.9$.
This costs bound propagations, never soundness, because only failing subproblems are split, no parent bound is reused, and \autoref{prop:bab-sound} takes the minimum over independently proved subproblems.

A repository test drives the search to the leaves with a stub that proves exactly the subproblems of at most $m$ neurons (or channels), and checks on dense and convolutional networks that \emph{every} set of at most $m$ neurons lies in a proved subproblem.
Another test compares $\textsc{Bound}(S)$ at a point input with the exact minimum over all perturbations of at most $m$ parameters in $\bigcup S$, and over 32 (subproblem, network) pairs no bound exceeds it.
Of these bounds, 4 are tight to machine precision, so on those subproblems the multi-parameter concretization is exact.
\autoref{sec:eval} adds the end-to-end check that no verified answer contradicts a counterexample of a baseline.

\subsection{Proof of \autoref{prop:bab-count}}\label{app:bab-count}

\begin{proof}
\emph{Upper bound.} The \emph{splitting tree} of layer $i$ has the whole layer as its root, and every group of at least two neurons has its two parts as children.
Its leaves are the $\nu_i$ single neurons and every inner node has two children, so it has $2\nu_i - 1$ nodes, and the $L$ trees together have $2\nu - L$ nodes.
Every group of the search is such a node, since the first groups are whole layers and $\textsc{Branch}$ replaces a node by its children.
Every subproblem bounded after the root is a set of exactly $m$ nodes, since the first subproblems are sets of $m$ layers (if $L \le m$, those of $\textsc{Branch}$(all layers)) and $\textsc{Branch}$ returns only subsets of $m$ groups of $T$.
A subproblem already generated is not queued again, so each set of $m$ nodes is bounded at most once.
Therefore, at most $\binom{2\nu - L}{m}$ subproblems are bounded after the root, and the root adds one.
\emph{Lower bound.} Since $\nu > m$, the root fails.
A subproblem bounded after the root has $m$ non-empty groups, so it is proved exactly when every group is a single neuron, and a failing subproblem has a group of at least two neurons.
One split gives $|T| = m + 1$, so $\textsc{Branch}$ returns children.
If $L \le m$, the first $\textsc{Branch}$ splits the layers, and as long as $|T| \le m < \nu$ some group has at least two neurons, so it also reaches $|T| = m + 1$.
The search thus never returns unknown.
There is no time limit, and by the upper bound only finitely many subproblems are queued, each once, so $Q$ empties and the search returns verified.
By \autoref{prop:cover}, the proved subproblems then cover every perturbation.
A perturbation that changes one parameter in each of $m$ given neurons lies only in the proved subproblem made of exactly these neurons, so all $\binom{\nu}{m}$ such subproblems are bounded.
For $m = 1$, a failing subproblem is one node, and its children are the node's two parts.
No node is skipped, since the proved subproblems are single neurons, so every node is bounded once.
With $L \ge 2$, the first subproblems are the $L$ layers, so the search makes exactly $1 + 2\nu - L$ bound propagations.
\end{proof}

For a fixed $m$, both bounds grow as $\nu^m$, and for $\nu \gg m$ they differ by a factor of about $(2 - L/\nu)^m$, close to $2^m$ when the layers are wide.
A repository test runs the search with the stub of \autoref{app:bab-cover} on five networks for $m \le 3$ and checks both bounds and the equality for $m = 1$.

\begin{remark}[Lean mechanization of \autoref{prop:bab-count} (Number of bound propagations)]\label{rem:lean-bab-count}
The upper bound is mechanized in \texttt{Proof/Subproblems.lean}, assuming only $\nu_i \ge 1$.
\texttt{card\_splitTree}, \texttt{splits\_mem}, and \texttt{card\_groupUniv\_le} prove that the splitting tree of a layer, built as in the implementation, has $2\nu_i - 1$ nodes and is closed under splitting, and that the $L$ trees have at most $2\nu - L$ nodes.
\texttt{subproblems\_card\_le} then bounds the number of distinct subproblems generated by $\textsc{Branch}$ by $\binom{2\nu - L}{m}$.
The counting step of the lower bound is \texttt{stub\_proves\_m\_sets} and \texttt{stub\_card\_ge} in \texttt{Proof/Cover.lean}.
When the search stops with $Q = \emptyset$ and every proved subproblem has at most $m$ neurons, every set of $m$ neurons is the neuron set of a proved subproblem, so at least $\binom{\nu}{m}$ subproblems are proved.
The implementation supplies the remaining facts: no subproblem is queued twice, and the search terminates.
The proofs mainly use \texttt{omega}, \texttt{rw}, \texttt{obtain}, and functional induction on the splitting tree, contain no \texttt{sorry}, and add no axioms beyond \texttt{propext}, \texttt{Classical.choice}, and \texttt{Quot.sound}.
\end{remark}

\FloatBarrier

\section{Proof of \autoref{prop:topm} and Lean Mechanizations}\label{app:proof}

\begin{proof}
\emph{Step 1.} Pick $\theta \le 0$ with $g_p \le \theta$ for $p \in T$ and $g_p \ge \theta$ for $p \notin T$.
If some parameter lies outside $T$, then $\theta = \min\bigl(0, \min_{q \notin T} g_q\bigr)$ works, since $g_p \le \min(0, g_q)$ for $p \in T$ and $q \notin T$.
Otherwise $\theta = 0$ works.
\emph{Step 2.} For every $s \in \mathcal{C}_m$ and every $p$,
\begin{equation}
  (s_p - \mathbf{1}_T(p))\, g_p \;\ge\; (s_p - \mathbf{1}_T(p))\, \theta
\end{equation}
because $s_p - 1 \le 0$ and $g_p \le \theta$ for $p \in T$, and $s_p \ge 0$ and $g_p \ge \theta$ for $p \notin T$.
Summing over $p$ gives
\begin{equation}
  \sum_p s_p g_p - \sum_{p \in T} g_p \;\ge\; \theta \Bigl(\sum_p s_p - |T|\Bigr)
\end{equation}
\emph{Step 3.} The right-hand side is $\ge 0$, which proves (a).
If some parameter lies outside $T$, then $|T| < K$, so $|T| = \min(m,K) = m$, and both $\theta$ and $\sum_p s_p - |T|$ are $\le 0$.
Otherwise $\theta = 0$.
\emph{Step 4.} The 0/1 vector $\mathbf{1}_T$ has $\sum_p \mathbf{1}_T(p) = |T| \le m$, so it is in $\mathcal{S}_m \subseteq \mathcal{C}_m$ with objective $\sum_p \mathbf{1}_T(p)\, g_p = \sum_{p \in T} g_p$, which proves (b).
By (a), (b), and $\mathcal{S}_m \subseteq \mathcal{C}_m$, the three minima are equal.
\end{proof}

\begin{remark}[Lean mechanization of \autoref{prop:topm}]\label{rem:lean-topm}
\autoref{prop:topm} is mechanized in Lean~4 with Mathlib as \texttt{topm\_le\_lp} (part (a)) and \texttt{lp\_relaxation\_exact} (the three equal minima) in \texttt{Proof/TopM.lean}, whose only hypothesis on the terms is $g_p \le 0$.
They characterize $T$ by $|T| = \min(m,K)$ and $g_p \le g_q$ for $p \in T$ and $q \notin T$, so no sorting is needed, and \texttt{exists\_topm} constructs such a $T$ for every $g$.
The threshold $\theta$ is a dual certificate (\texttt{le\_of\_certificate}) reducing the proof to elementwise linear inequalities.
\texttt{topm\_sound} proves the soundness step of \autoref{sec:budget}, that a perturbation of at most $m$ parameters whose terms are at least $g_p$ changes the bound by at least $\topm(g)$.
The proofs mainly use \texttt{rw}, \texttt{refine}, and \texttt{linarith}, contain no \texttt{sorry}, and add no axioms beyond \texttt{propext}, \texttt{Classical.choice}, and \texttt{Quot.sound}.
\end{remark}

\begin{remark}[Lean mechanization of the soundness of multi-bitflip bound propagation]\label{rem:lean-terms}
In \texttt{Proof/Terms.lean}, a corner attains the minimum in~\eqref{eq:g} over any nonempty rectangle (\texttt{corner\_isLeast}), $\underline g_p \le 0$ follows from $\ell_p \le 0 \le u_p$ alone (\texttt{corner\_nonpos}), and minimizing a term linear in the change over the interval hull of $R_n$ gives its minimum over $R_n$ (\texttt{hull\_min\_isLeast}).
\texttt{sep\_isLeast}, \texttt{concretize\_sound}, and \texttt{concretize\_exact} prove with \texttt{topm\_sound} that $\min_{x \in \mathcal{B}} (Ax + c) + \topm(g)$ is a lower bound on the margin under every flip in at most $m$ parameters and the minimum of the relaxed program.
For a convolution, \texttt{sum\_min\_attained} proves that the range of $S_p$ is the sum of the ranges of its summands, and \texttt{conv\_corner\_le} applies the corner bound to $D_p S_p$.
The proofs mainly use \texttt{rcases}, \texttt{refine}, and \texttt{nlinarith}, contain no \texttt{sorry}, and add no axioms beyond \texttt{propext}, \texttt{Classical.choice}, and \texttt{Quot.sound}.
\end{remark}

\begin{remark}[Lean mechanization of the quantization facts]\label{rem:lean-quant}
In \texttt{Proof/Quant.lean}, \texttt{card\_reach\_real} proves that $R_n$ has exactly $\sum_{r \le n} \binom{Q}{r}$ elements, assuming only $\Delta_i \ne 0$, and \texttt{change\_interval\_contains\_zero} proves $\ell_\alpha \le 0 \le u_\alpha$ in~\eqref{eq:hull}.
\texttt{sign\_flip\_min\_real} proves that one sign-bit flip reaches $-2^{Q-1}\Delta_i$, and \texttt{clip\_change\_bound} proves the bound $|D^{(i)}_{jk}| \le (2^Q-1)\,c/(2^{Q-1}-1)$ used in fine-tuning (\autoref{app:setup}), assuming only that every real parameter of the layer lies in $[-c, c]$.
The proofs mainly use \texttt{rw}, \texttt{simp}, and \texttt{refine}, contain no \texttt{sorry}, and add no axioms beyond \texttt{propext}, \texttt{Classical.choice}, and \texttt{Quot.sound}.
\end{remark}
\section{Experimental Details}\label{app:setup}

\paragraph{Networks.}
The architectures (\autoref{tab:nets}) and ONNX files come from the VNN-COMP benchmarks~\citep{brix2023first,kaulen20256th} \texttt{acasxu}, \texttt{mnistfc}, and \texttt{cora}.
Our loader matches \texttt{onnxruntime} on 64 random inputs to a relative error below $10^{-6}$, the float32 precision.
The \texttt{cora} networks read images with swapped spatial axes, MNIST as $(W, H)$ and CIFAR as $(C, W, H)$.
With this layout, their accuracy is 98.2\% and 51.3\%, respectively, against 13.6\% and 20.6\% with the usual layout.
For ACAS Xu, accuracy is the agreement with the original network's advisory on sampled inputs.
After plain 8-bit quantization, the published networks agree with only 7.9\% to 58.2\% of these advisories (pretrained column), so BFAVerifier also retrained them.

\begin{table}[h]
\centering
\small
\begin{tabular}{llrrcc}
\toprule
network & layers & $K$ & clip $c$ & $Q$-bit accuracy & pretrained \\
\midrule
ACAS Xu 1\_1 & $6{\times}50$ & 13{,}305 & 0.5 & 94.4\% & 7.9\% \\
ACAS Xu 2\_1 & $6{\times}50$ & 13{,}305 & 0.5 & 90.0\% & 58.2\% \\
ACAS Xu 3\_3 & $6{\times}50$ & 13{,}305 & 0.5 & 94.3\% & 17.9\% \\
MNIST FC $2{\times}256$ & $2{\times}256$ & 269{,}322 & 0.1 & 96.2\% & 98.1\% \\
MNIST FC $4{\times}256$ & $4{\times}256$ & 400{,}906 & 0.1 & 97.9\% & 97.7\% \\
MNIST FC $7{\times}250$ & $7{\times}250$ & 575{,}260 & 0.05 & 97.6\% & 98.2\% \\
CIFAR FC $7{\times}250$ & $7{\times}250$ & 1{,}147{,}260 & 0.05 & 51.9\% & 51.3\% \\
\bottomrule
\end{tabular}
\caption{The evaluated networks.}
\label{tab:nets}
\end{table}

\paragraph{Fine-tuning.}
Fine-tuning starts from the pretrained weights and minimizes the hinge loss $\max(0, M - \text{margin})$ with $M = 5$, batch size 128, and 10 epochs of Adam, clipping every parameter to $[-c,c]$ after each step.
ACAS Xu has no training data, so its labels are the original network's advisories on 200k inputs sampled uniformly from its normalized domain.
Before any verification run, $c$ is fixed as the smallest value whose $Q$-bit accuracy is within 3 points of the pretrained network.
The grids are $c \in \{0.05, 0.1, 0.2\}$ for the image networks and $\{0.5, 1, 2\}$ for ACAS Xu, each with learning rates $10^{-3}$ and $10^{-4}$, and are extended upward when no value qualifies.
For ACAS Xu no such $c$ survives a single flip, so we take the most accurate setting whose three pilot instances at $m = 1$ are not all falsified (a counterexample checked by an exact forward pass).
Every intermediate result is in the repository's logs.

Clipping bounds every change interval uniformly.
With every real parameter in $[-c, c]$, the step is $\Delta_i \le c/(2^{Q-1}-1)$, and flipping any bits of a $Q$-bit code changes the integer by at most $2^Q-1$.
Hence $|D^{(i)}_{jk}| \le (2^Q-1)\,c/(2^{Q-1}-1) \approx 2c$ for every parameter, whatever its value and whichever bits flip, including the sign bit.

\paragraph{Inputs.}
For the image networks, we permute the test set with seed 0, take the first 15 correctly classified images that bound propagation proves at $m = 0$, and clip their $\ell_\infty$ boxes of radius $0$ or $2/255$ to $[0,1]$.
At radius $2/255$, no CIFAR-10 test image passes this filter (4{,}809 are misclassified and 5{,}191 are not proved at $m = 0$), so CIFAR-10 has instances only at radius $0$.
ACAS Xu inputs are sampled uniformly from the normalized domain, with radius $0$ or $0.01$ and the same $m = 0$ filter.

\paragraph{The standard (pretrained) networks.}
With the published weights, no method proves any instance of MNIST FC $2{\times}256$ or ACAS Xu 1\_1 (radius $0$, $m \in \{1, 2\}$, 10 inputs, 40 instances in total).
\tool{} returns unknown, and both baselines time out or find a counterexample, which motivates fine-tuning (\autoref{sec:eval}).

\paragraph{Hardware and software.}
All runs use one machine with an Intel Core i9-9900K CPU (8 cores), 62\,GB of memory, and an NVIDIA RTX~2080~Ti GPU (11\,GB).
Multi-parameter bound propagation and BFAVerifier's oracle run on the GPU in 64-bit floating point, and the MILP on the CPU with Gurobi~13 and 8 threads.
Every instance has a limit of 60\,s, and timed jobs run one at a time.

\paragraph{Implementation.}
We keep only a running top-$m$ of the terms of each neuron and bound all neurons of a layer together as one matrix, with convolutions as in \autoref{app:conv}.
Branch and bound bounds a batch of subproblems in one GPU call, each as a 0/1 mask over the $K$ parameters.
In a dense layer, the terms $\underline g_p$ of the weights of one neuron $\hat z_j$ differ only by the factor $\underline A_j$, so only the $m$ most extreme terms of each neuron can enter the top-$m$.
Keeping them instead of the full $(S, \text{out}, \text{in})$ tensor makes the $7{\times}250$ networks practical and leaves every bound bit-identical, as a test checks.

\FloatBarrier

\subsection{The Exact MILP Baseline}\label{app:milp}

The baseline encodes the verification problem~\eqref{eq:goal} exactly for at most $m$ flipped parameters.
Each perturbable parameter $p$ has binaries $\delta_{p,r}$ with $\sum_r \delta_{p,r} = 1$, one per reachable value $v_{p,r} \in R_n(p)$ of~\eqref{eq:reach}, with $v_{p,0}$ nominal.
The constraint $\sum_p (1 - \delta_{p,0}) \le m$ limits the number of flipped parameters.
The product of a perturbed weight $W^{(i)}_{jk}$ and its input $z_k \in [\underline z_k, \overline z_k]$ is linearized with continuous $y_{r}$, $\underline z_k\, \delta_{p,r} \le y_{r} \le \overline z_k\, \delta_{p,r}$ and $\sum_r y_{r} = z_k$.
It contributes $\sum_r v_{p,r}\, y_{r}$ to $\hat z_j$, which is exact because exactly one $\delta_{p,r}$ is $1$.
A perturbed bias adds $\sum_r (v_{p,r} - v_{p,0})\, \delta_{p,r}$ to its nominal value.
The copies of a kernel entry have separate products but share its binaries $\delta_{p,\cdot}$, so the constraint counts the entry once.
Average pooling and other fixed linear maps enter as linear equalities, and the input ranges over the box $\mathcal{B}$.
A ReLU with input range $[l, u]$ is encoded exactly as $z = \hat z$ if $l \ge 0$ and as $z = 0$ if $u \le 0$.
Otherwise, a binary $\beta$ enforces $z \ge \hat z$, $z \ge 0$, $z \le u \beta$, and $z \le \hat z - l(1-\beta)$.
The bounds $[l,u]$ come from multi-bitflip bound propagation for the same $m$, so they hold for every perturbation.

For each margin row $c$ of $C$, Gurobi minimizes $c^\top N^\rho(x)$ and stops once its best bound exceeds $10^{-6}$, which proves the row, or once it finds a solution below $-10^{-6}$.
The instance is verified only when the solver proves every row, and the rows share the time limit, including model building.
With one set of binaries per parameter and one product per parameter occurrence, the model does not build within the limit for hundreds of thousands of parameters.
If the first $m = 1$ instance of a run ends during model building, the remaining MILP instances of the run are skipped and reported as unproved.

Our re-implementation of BFAVerifier~\citep{zhang2025bfaverifier} follows its Algorithm~1.
For every parameter, the procedure BFA\_RA splits the sorted reachable values into non-negative and negative parts and bisects each with a bound-propagation oracle in which the parameter is symbolic.
The parameters that BFA\_RA cannot prove go to the exact MILP with $m = 1$, and the GPU oracle answers the first query of every parameter in batches.
For a convolution, as in Section~4.5.2 of that paper, the kernel entry is symbolic in all its copies, and each copy gets a separate linear bound.
On 18 small dense networks and 36 small convolutional instances, both baselines agree with the ground truth from enumerating every perturbation.

\begin{remark}[Lean mechanization of the exact MILP encoding]\label{rem:lean-milp}
In \texttt{Proof/Milp.lean}, \texttt{relu\_exact} proves that the big-M constraints hold for some binary $\beta$ if and only if $z = \max(\hat z, 0)$, assuming only $l \le \hat z \le u$.
For binaries in $\{0,1\}$ that satisfy the product constraints, \texttt{onehot\_product} proves that the product contributes exactly $v_{p,r} z_k$ for the selected value, since every other $y_r$ is $0$.
\texttt{onehot\_feasible} proves that every value is feasible when $\underline z_k \le z_k \le \overline z_k$.
\texttt{budget\_iff} proves that $\sum_p (1 - \delta_{p,0}) \le m$ holds if and only if at most $m$ parameters are perturbed, assuming only $\delta_{p,0} \in \{0,1\}$.
The proofs mainly use \texttt{rw}, \texttt{simp}, and \texttt{linarith}, contain no \texttt{sorry}, and add no axioms beyond \texttt{propext}, \texttt{Classical.choice}, and \texttt{Quot.sound}.
\end{remark}

\subsection{Convolutions and Pooling}\label{app:conv}

A convolution is stored as its kernel with a map from every nonzero entry of the unrolled matrix to the kernel entry it copies, and back-substitution uses a transposed convolution.
The term of kernel entry $p$ sums the interval of $A_{j_e} z_{k_e}$ over its copies and takes the four corners of \autoref{sec:topm}.
An unfold followed by a matrix product computes these sums for all bounds at once.
In the first layer, neuron $(c, q)$ at position $q$ of channel $c$ depends only on the kernel entries and bias of channel $c$.
Its range directly adds the $m$ most extreme of their terms, which gives the same range as back-substitution with $A = I$ without forming the identity.
Average pooling and other fixed linear maps between a ReLU and the next layer have no parameters, so they map ranges by their interval image and enter back-substitution as sparse matrices.

Sharing is essential for soundness.
Take a network with a $1{\times}4{\times}4$ input, a $2{\times}2$ convolution with two channels, average pooling, and a dense layer ($Q = 8$, $n = 1$, one input point).
At $m = 1$, treating the copies of each kernel entry as separate parameters gives a lower bound of $+4.94$ on a margin whose true worst value under one kernel flip is $+4.15$.
Our bound is $+3.63$.
A single flipped kernel entry changes all of its copies at once, and the separate-copies model counts this change as several perturbed parameters.

\end{document}